\newcommand{\bin}{\mathsf{bin}}
\newcommand{\val}{\mathsf{val}}
\newcommand{\height}{\mathsf{height}}
\newcommand{\NN}{\mathbb{N}}
\newcommand{\ZZ}{\mathbb{Z}}
\newcommand{\coNP}{\mathsf{coNP}}
\newcommand{\DSPACE}{\mathsf{DSPACE}}
\newcommand{\FSPACE}{\mathsf{FSPACE}}
\newcommand{\FTIME}{\mathsf{FTIME}}
\newcommand{\FPSPACE}{\mathsf{FPSPACE}}
\newcommand{\FpolyL}{\mathsf{FpolyL}}
\newcommand{\FP}{\mathsf{FP}}
\newcommand{\MTDD}{\mathrm{MTDD}}
\newcommand{\NC}{\mathsf{NC}}
\newcommand{\NSPACE}{\mathsf{NSPACE}}
\newcommand{\NP}{\mathsf{NP}}
\newcommand{\Ptime}{\mathsf{P}}
\newcommand{\polyL}{\mathsf{polyL}}
\newcommand{\PSPACE}{\mathsf{PSPACE}}
\newcommand{\ignore}[1]{{}}
\newenvironment{xmpl}{\begin{example}}{\end{example}}
\begin{document}

  \title{Processing Succinct Matrices and Vectors\thanks{The first (second) author is supported by the DFG grant LO 748/8-2 (SCHM 986/9-2).}}  %
  \titlerunning{Succinct Matrices}
\author{Markus Lohrey\inst{1}
 \and Manfred  Schmidt-Schau{\ss}\inst{2}
\institute{Universit\"at Siegen, Department f\"ur Elektrotechnik und Informatik, Germany
\and Institut f{\"u}r Informatik,  Goethe-Universit{\"a}t,  D-60054 Frankfurt, Germany}}
 
\maketitle

\begin{abstract}
 We study the complexity of algorithmic problems for matrices that are represented
 by multi-terminal decision diagrams (MTDD). These are a variant of ordered decision 
 diagrams, where the terminal nodes are labeled with arbitrary elements of a semiring  (instead of 
 $0$ and $1$).  A simple example shows that the product of two MTDD-represented
 matrices cannot be represented by an MTDD of polynomial size. To overcome this
 deficiency, we extended MTDDs to $\MTDD_+$ by allowing
 componentwise symbolic addition of variables (of the same dimension) in rules.
 It is shown that 
 accessing an entry,  equality checking, 
 matrix multiplication, and other basic matrix operations can be solved in
 polynomial time for $\MTDD_+$-represented matrices.
 On the other hand, testing whether the determinant of a
 MTDD-represented matrix vanishes 
 is $\PSPACE$-complete, and the same problem is
 $\NP$-complete for $\MTDD_+$-represented diagonal matrices. Computing a specific entry in a product of
 MTDD-represented matrices is $\#\mathsf{P}$-complete.
 \end{abstract}

\section{Introduction}

Algorithms that work on a succinct representation of certain objects 
can nowadays be found in many areas of computer science. A paradigmatic 
example is the use of OBDDs (ordered binary decision diagrams) in hardware 
verification  \cite{Bryant86,MeinelT98}. OBDDs are a succinct representation of Boolean functions.
Consider a boolean function $f(x_1,\ldots,x_n)$ in $n$ input variables.
One can represent $f$ by its decision tree, which is a full binary tree of height $n$
with $\{0,1\}$-labelled leaves. The leaf that is reached from the root via the path
$(a_1, \ldots, a_n) \in \{0,1\}^n$ (where $a_i=0$ means that we descend to the 
left child in the $i$-th step, and $a_i=1$ means that we descend to the 
right child in the $i$-th step) is labelled with the bit $f(a_1,\ldots,a_n)$.
This decision tree can be folded into a directed acyclic graph by eliminating 
repeated occurrences of isomorphic subtrees. The result is the OBDD for $f$ 
with respect to the variable ordering $x_1, \ldots, x_n$.\footnote{Here, we are cheating a bit:
In OBDDs a second elimination rule is applied that removes nodes for which the left and right
child are identical. On the other hand, it is known that asymptotically the compression achieved by this elimination rule
is negligible \cite{Wegener94a}.} Bryant was the first who realized that OBDDs are an adequate tool in order
to handle the state explosion problem in hardware verification \cite{Bryant86}.

OBDDs can be also used for storing large graphs. A graph $G$
with $2^n$ nodes and adjacency matrix $M_G$
can be represented by 
the boolean function $f_G(x_1,y_1, \ldots, x_n,y_n)$,
where $f_G(a_1,b_1, \ldots, a_n,b_n)$ is the entry of $M_G$ 
at position $(a,b)$; here $a_1 \cdots a_n$ (resp., $b_1 \cdots b_n$) is the binary representation 
of the index $a$ (resp. $b$). Note that we use the so called interleaved variable ordering
here, where the bits of the two coordinates $a$ and $b$ are bitwise interleaved. This
ordering turned out to be convenient in the context of OBDD-based graph representation,
see e.g. \cite{FujiiOH93}.

Classical graph problems (like reachability, alternating
reachability, existence of a Hamiltonian cycle) have been studied
for OBDD-represented graphs in
\cite{feigenbaum-kannan-vardi-viswanathan:99,Vei98CC}.
It turned out that these problems are exponentially harder 
for OBDD-represented graphs than for explicitly given graphs.
In \cite{Vei98CC} an upgrading theorem for OBDD-rep\-resented graphs was
shown. It roughly
states that completeness of a problem $A$ for a complexity class $C$
under quantifier free reductions implies
completeness of the OBDD-variant of $A$ for the exponentially harder 
version of $C$ under polynomial time reductions.

In the same way as OBDDs represent boolean mappings, 
functions from $\{0,1\}^n$ to any set $S$ can be represented. One simply has to label the leaves
of the decision tree with elements from $S$. This yields multi-terminal decision diagrams (MTDDs) \cite{FujitaMY97}. Of particular
interest is the case, where  $S$ is a semiring, e.g. $\NN$ or $\ZZ$.
In the same way as an adjacency matrix (i.e., a boolean matrix) of dimension $2^n$
can be represented by an OBDD,  a matrix of dimension $2^n$ over any semiring can be represented by an MTDD.
As  for OBDDs, we assume that the bits of the two coordinates $a$ and $b$ are interleaved in the order
$a_1,b_1, \ldots, a_n,b_n$.  This 
implies that an MTDD can be viewed as a set of rules of the form
\begin{equation} \label{eq-rules-of-MTDD}
A \to \left(\begin{array}{ll} A_{1,1} &  A_{1,2}\\ A_{2,1} &
      A_{2,2}\end{array}\right)  
 \qquad   \text{ or } \qquad     
      B \to a \ \text{ with } \ a \in S.
\end{equation}
where $A$, $A_{1,1}$, $A_{1,2}$, $A_{2,1}$, and $A_{2,2}$ are 
variables that correspond to certain nodes of the MTDD (namely
those nodes that have even distance from the root node).
Every variable  produces a matrix of dimension $2^h$ for some 
$h \geq 0$, which we call the height of the variable.
The variables $A_{i,j}$ in \eqref{eq-rules-of-MTDD}
must have the same height $h$, and $A$ has height $h+1$. 
The variable $B$ has height $0$. We assume that the additive monoid
of the semiring $S$ is finitely generated, hence every $a \in S$ has a finite 
representation.

MTDDs yield very compact representations of sparse matrices. It was
shown that an $(n \times n)$-matrix with $m$ nonzero entries can be represented by an MTDD
of size $O(m \log n)$ \cite[Theorem~3.2]{FujitaMY97}, which is better than standard succinct representations for sparse matrices.
Moreover, MTDDs can also yield very compact representations of non-sparse matrices. For instance, the Walsh matrix
of dimension $2^n$ can be represented by an MTDD of size $O(n)$, see \cite{FujitaMY97}. In fact, the usual definition
of the $n$-th Walsh matrix is exactly an MTDD.
Matrix algorithms for MTDDs
are studied in \cite{FujitaMY97} as well, but no precise complexity
analysis is carried out.  In fact, the straightforward matrix
multiplication algorithm for multi-terminal decision diagrams from \cite{FujitaMY97}
has  an exponential worst case running time, and this is unavoidable:
The smallest MTDD that produces the product of two
MTDD-represented matrices may be of exponential size in the two 
MTDDs, see Theorem~\ref{prop-MTDD-mult}. 
The first main contribution of this paper is a  generalization of MTDDs
that overcomes this deficiency: An 
$\MTDD_+$ consists of rules of the form \eqref{eq-rules-of-MTDD} together 
with addition rules of the form 
$A \to B+C$, where ``$+$'' refers to 
matrix addition over the underlying semiring. Here, $A$, $B$, and $C$ must have the same height, i.e.,
produce matrices of the same dimension.  We show that an $\MTDD_+$ 
for the product of two $\MTDD_+$-represented matrices
can be computed in polynomial time (Theorem~\ref{thm:main-matrix-mult}). 
In Section~\ref{sec-MTDD}
we also present efficient (polynomial time)
algorithms for several other important matrix problems on 
$\MTDD_+$-represented input matrices: 
computation of a specific matrix entry, computation of the trace,
matrix transposition, tensor and Hadamard product.
Section~\ref{sec-eq} deals with equality checking. It turns 
out that equality of $\MTDD_+$-represented matrices can be checked
in polynomial time, if the additive monoid is cancellative, in all
other cases equality checking is {\sf coNP}-complete.

To the knowledge of the authors, complexity results similar to those from 
\cite{feigenbaum-kannan-vardi-viswanathan:99,Vei98CC} for OBDDs do not exist
in the literature on MTDDs.  Our second main contribution fills this gap.
We prove that already for MTDDs over $\ZZ$ it is $\PSPACE$-complete to
check whether the determinant of the generated matrix is zero (Theorem~\ref{theorem-det}).
This result is shown by lifting a classical construction of Toda \cite{Toda91countingproblems}
(showing that computing the determinant of an explicitly given 
integer matrix is complete for the counting class $\mathsf{GapL}$) 
to configuration graphs of polynomial space bounded Turing machines, which
are of exponential size. It turns out that the adjacency matrix of the configuration graph of a 
polynomial space bounded Turing machine can be produced by a small
MTDD. Theorem~\ref{theorem-det} sharpens a recent result from 
\cite{GreKoiPort12} stating that it is $\PSPACE$-complete to check whether the 
determinant of a matrix that is represented by a boolean circuit (see 
Section~\ref{sec-bool-circuits}) vanishes.
We also prove several hardness results for counting classes. For
instance, computing a specific entry of a matrix power $A^n$, where
$A$ is given by an MTDD over $\NN$
is $\#\mathsf{P}$-complete (resp. $\#\mathsf{PSPACE}$-complete) if $n$ is given
unary (resp. binary). Here, $\#\mathsf{P}$ (resp. $\#\mathsf{PSPACE}$)
is the class of
functions counting the number of accepting computations of a
nondeterministic polynomial time Turing machine \cite{Valiant79a}
(resp., a nondeterministic polynomial space Turing machine \cite{Ladner89}).
An example of a natural $\#\mathsf{PSPACE}$-complete counting problem  
is counting the number of strings not accepted by a given 
NFA \cite{Ladner89}.

\section{Related work}

\paragraph*{\bf Sparse matrices and quad-trees.}
To the knowledge of the authors, most of the literature on matrix 
compression deals with sparse matrices, where most of the matrix
entries are zero. There are several succinct representations of
sparse matrices. One of which are  {\em quad-trees}, used in computer graphics for the
representation of large constant areas in 2-dimensional pictures,
see for example \cite{samet:90,eppstein-quadtree-GS:08}.
Actually, an MTDD can be seen as a quad-tree that is folded into
a dag by merging identical subtrees. 

\paragraph*{\bf Two-dimensional straight-line programs.}

MTDDs are also a special case of 2-dimen\-sional
straight-line programs (SLPs).
A (1-dimensional) SLP is a context-free grammar in Chomsky normal form that generates
exactly one OBDD. An SLP with $n$ rules can generate a string
of length $2^n$; therefore an SLP can be seen as a succinct
representation of the string it generates.
Algorithmic problems that can be solved efficiently (in polynomial
time) on SLP-represented strings are for instance equality checking 
(first shown by Plandowski \cite{Plandowski:94}) and pattern matching, see 
\cite{lohrey-survey-13} for a survey.

In \cite{berman-karpinski-2d:02} a 2-dimensional extension of SLPs 
(2SLPs in the following) was defined. Here, every variable of the
grammar generates a (not necessarily square) matrix  (or picture), where every
position is labeled with an alphabet symbol. Moreover, there are two
(partial) concatenation operations: horizontal composition (which is defined
for two pictures if they have the same height) and vertical composition (which is defined
for two pictures if they have the same width).
This formalism does not share all the nice algorithmic
properties of (1-dimensional) SLPs \cite{berman-karpinski-2d:02}:
Testing whether two 2SLPs produce the same picture is only
known to be in $\mathsf{coRP}$ (co-randomized polynomial time).
Moreover,  checking whether an explicitly given (resp., 2SLP-represented) picture appears within 
a 2SLP-represented picture is $\NP$-complete (resp., $\Sigma_2^P$-complete).
Related hardness results in this direction concern the
convolution of two SLP-represented strings of the same length (which can be
seen as a picture of height 2). The convolution of strings
$u = a_1 \cdots a_n$ and $v = b_1\cdots b_n$ is 
the string $(a_1,b_1) \cdots (a_n,b_n)$.
By a result from \cite{bertoni-choffrut:08} (which is stated
in terms of the related operation of literal shuffle), the size of a
shortest  SLP for the convolution
of two strings that are given by SLPs $G$ and $H$
may be exponential in the size of $G$ and $H$. Moreover,
it is $\PSPACE$-complete to check for two SLP-represented
strings  $u$ and $v$ and an NFA $T$ 
operating on strings of pairs of symbols, whether $T$ accepts 
the convolution of $u$ and $v$ \cite{lohrey-leaf:11}.

MTDDs restrict 2SLPs by forbidding unbalanced derivation trees.
The derivation tree of an MTDD results from unfolding the rules in \eqref{eq-rules-of-MTDD};
it is a tree, where every non-leaf node has exactly four children and every root-leaf path has
the same length. 

Let us finally mention that straight-line programs are also used for the compact
representation of other objects, e.g. polynomials \cite{IbMo83}, trees \cite{LoMa06}, graphs \cite{LeWa92},
and regular languages \cite{GeffertMP10}.

\paragraph*{\bf Tensor circuits.}

In \cite{BeaudryH07,DammHM02}, the authors investigated the problems of evaluating tensor formulas and tensor 
circuits. Let us restrict to the latter. A tensor circuit is a circuit where the gates evaluate to matrices
over a semiring
and the following operations are used: matrix addition, matrix multiplication, and 
tensor product. Recall that the tensor product of two matrices $A = (a_{i,j})_{1 \leq i \leq m, 1 \leq i \leq m}$ and $B$ is the matrix
$$
A \otimes B = \left(\begin{array}{ccc} a_{1,1} B & \cdots &  
      a_{1,m} B \\
    \vdots & & \vdots \\
     a_{n,1} B & \cdots &  
      a_{n,m} B 
      \end{array}\right)  
$$
It is a $(mk \times nl)$-matrix if $B$ is a $(k \times l)$-matrix. 
In \cite{BeaudryH07} it is shown among other results that computing the output value of a scalar tensor circuit
(i.e., a tensor circuit that yields a $(1\times 1)$-matrix) over the natural numbers
is complete for the counting class $\#\mathsf{EXP}$.
An $\MTDD_+$ over $\ZZ$ can be seen as a tensor circuit that (i) does not use matrix multiplication and (ii) where
for every tensor product the left factor is a $(2 \times 2)$-matrix. To see the correspondence, note that
\begin{gather*}
\left(\begin{array}{ll} A_{1,1} &  A_{1,2}\\ A_{2,1} &
      A_{2,2}\end{array}\right)  =
\left(\begin{array}{ll} 1&  0\\  0 &
      0\end{array}\right)    \!\otimes\! A_{1,1} +
\left(\begin{array}{ll} 0 &  1\\  0 &
      0\end{array}\right)    \!\otimes\! A_{1,2} +
\left(\begin{array}{ll} 0 &  0\\  1 &
      0\end{array}\right)    \!\otimes\! A_{2,1} +
      \left(\begin{array}{ll} 0 &  0\\  0 &
      1\end{array}\right)    \!\otimes\! A_{2,2} \\
\left(\begin{array}{ll} a_{1,1} &  a_{1,2}\\ a_{2,1} &
      a_{2,2}\end{array}\right)  \otimes B = 
\left(\begin{array}{ll} a_{1,1} B &  a_{1,2} B \\ a_{2,1} B &
      a_{2,2} B \end{array}\right)      
\end{gather*}
Each of the matrices $a_{i,j}B$ can be generated from $B$ and $-B$ using 
$\log |a_{i,j}|$ many additions (here we use the fact that the underlying semiring is $\ZZ$).

\section{Preliminaries}

We consider matrices over a semiring $(S,+,\cdot)$ with $(S,+)$ a finitely generated commutative monoid with unit $0$.
The unit of the monoid $(S,\cdot)$ is $1$. 
We assume that  $0 \cdot a = a \cdot 0 = 0$ for all $a \in S$. 
Hence, if $|S| > 1$, then $1 \neq 0$ ($0=1$ implies $a = 1 \cdot a = 0 \cdot a = 0$ for all $a \in S$).
With $S^{n \times n}$ we denote the set of all $(n\times n)$-matrices over $S$.

%
All time bounds in this paper implicitly refer to 
the RAM model of computation with a logarithmic cost
measure for arithmetical operations on integers, where
arithmetic operations on $n$-bit numbers need time 
$O(n)$. For a number $n \in \ZZ$ let us denote with $\bin(n)$ its 
binary encoding.

We assume that the reader has some basic 
background in complexity theory, in particular we assume that
the reader is familiar with the classes $\NP$, $\coNP$, and $\PSPACE$.
With $\polyL$ (polylogarithmic space) we denote the class $\bigcup_{k \geq 1} \DSPACE(\log^k(n))$
(which by Savitch's theorem is equal to $\bigcup_{k \geq 1} \NSPACE(\log^k(n))$).

A function $f : \{0,1\}^* \to \{0,1\}^*$ belongs to the class 
$\FSPACE(s(n))$ (resp. $\FTIME(s(n))$) if $f$ can be computed on a deterministic
Turing machine in space (resp., time) $s(n)$.\footnote{The assumption that the input
and output alphabet of $f$ is binary is made here to make
the definitions more readable; the extension to arbitrary finite alphabets is straightforward.}
 As usual, only the space 
on the working tapes is counted. Moreover, the output is written
from left to right on the output tape, i.e., 
in each step the machine either outputs a new symbol
on the output tape, in which case the output head moves
one cell to the right, or the machine does not output
a new symbol in which case the output head does not move.
We define 
\begin{eqnarray*}
\FP &=& \bigcup_{k \geq 1} \FTIME(n^k), \\
\FpolyL &=& \bigcup_{k \geq 1} \FSPACE(\log^k(n)), \\
\FPSPACE &=& \bigcup_{k \geq 1} \FSPACE(n^k).
\end{eqnarray*}
Note that for a function $f \in \FPSPACE$ we have
$|f(w)| \leq 2^{|w|^{O(1)}}$ for every input. 
The function that maps an explicitly given integer matrix (with binary encoded entries) to its determinant
belongs to uniform $\NC^2$ \cite{Coo85} and hence
to $\FSPACE(\log^2(n))$.

We need the following simple lemma, see e.g.
\cite[Lemma~2.1]{LoMa11regular}.

\begin{lemma} \label{PSPACE}
If $f \in \FPSPACE$  and  $L \in \polyL$ then $f^{-1}(L) \in \PSPACE$.
\end{lemma}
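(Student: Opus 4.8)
The plan is to run a space-bounded simulation that interleaves the computation of $f(w)$ with the membership test for $L$, never storing the (possibly exponentially long) string $f(w)$ in its entirety. Fix a machine $M_f$ witnessing $f \in \FPSPACE$, so on input $w$ of length $n$ it uses space at most $p(n)$ for some polynomial $p$ and, as noted in the excerpt, $|f(w)| \le 2^{p(n)}$; fix also a machine $M_L$ witnessing $L \in \polyL$, working in space $\log^k(m)$ on inputs of length $m$. The key observation is that the input to $M_L$ has length $m = |f(w)| \le 2^{p(n)}$, so $M_L$ runs in space $\log^k(2^{p(n)}) = p(n)^k$, which is still polynomial in $n$. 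Thus both machines are individually polynomial-space on our input; the only subtlety is that $M_L$ wants random access (or at least left-to-right access) to a string that we cannot afford to materialize.

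The main step is the composition. I would simulate $M_L$ on input $f(w)$, maintaining explicitly only (i) the current configuration of $M_L$ (its state, work-tape contents, and head positions), which takes space $p(n)^k$, and (ii) a counter $i$, written in binary, holding the current position of $M_L$'s input head, which takes space $p(n)$ since $i \le 2^{p(n)}$. Whenever $M_L$ needs to read the $i$-th symbol of its input, I pause and recompute it from scratch: run $M_f$ on $w$ from the beginning, counting output symbols as they are emitted (recall from the excerpt that $M_f$ writes its output left-to-right, one symbol per "output step"), and return the symbol emitted at output-step $i$, discarding all other output. This recomputation uses space $p(n)$ for $M_f$'s work tapes plus $O(p(n))$ for the output-position counter — no extra storage for the output tape itself. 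Each symbol query therefore costs polynomial space (reused across queries), and the total space is $p(n)^k + O(p(n)) = n^{O(1)}$. Accept iff the simulated $M_L$ accepts. By construction this accepts exactly when $f(w) \in L$, i.e. when $w \in f^{-1}(L)$, so $f^{-1}(L) \in \PSPACE$.

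The step I expect to require the most care is the bookkeeping that keeps the three nested processes — the outer $M_L$-simulation, the inner $M_f$-recomputation, and the symbol-counter that links them — within polynomial space simultaneously, rather than accidentally nesting a fresh copy of a tape on each query. This is handled by the standard discipline of reusing the same scratch region for every recomputation of $f(w)$ and by never copying the output of $M_f$ anywhere: we only ever compare the running output-position counter with the stored target index $i$ and hand back a single symbol. Everything else is routine, so I would not belabor it. (As an aside, one can equally phrase this via $\PSPACE = \NPSPACE$ and guess-and-check, but the deterministic composition above already suffices and matches the cited \cite[Lemma~2.1]{LoMa11regular}.)
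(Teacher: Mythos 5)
Your argument is correct: the on-demand recomputation of the $i$-th output symbol of $M_f$ (exploiting the left-to-right output convention) combined with the observation that $\log^k(2^{p(n)})=p(n)^k$ is exactly the standard composition proof, which is what the paper relies on by citing \cite[Lemma~2.1]{LoMa11regular} rather than proving the lemma itself. Nothing is missing; the space accounting for the outer simulation, the position counter, and the reused scratch space for $M_f$ is handled properly.
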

The following result can be shown in the same way as Lemma~\ref{PSPACE}:

\begin{lemma} \label{PSPACE-counting}
If $f \in \FPSPACE$ and $g \in \FpolyL$ then the mapping $h$ 
defined by $h(x) = g(f(x))$ for all inputs $x$
belongs to  $\FPSPACE$.
\end{lemma}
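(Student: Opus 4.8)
The plan is to mimic the proof of Lemma~\ref{PSPACE}, using the standard technique for composing space-bounded transducers without ever writing down the intermediate string $f(x)$, which may be exponentially long.

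First, fix constants so that $f$ is computed by a deterministic Turing machine $M_f$ running in space $|x|^{c}$; as noted before Lemma~\ref{PSPACE}, this forces $N(x) := |f(x)| \le 2^{|x|^{c}}$. Let $M_g$ be a deterministic machine computing $g$ in space $\log^{k}(m)$ on inputs of length $m$. When $M_g$ is run on $y = f(x)$, its work-tape usage is bounded by $\log^{k}(N(x)) \le (|x|^{c})^{k} = |x|^{O(1)}$, which is polynomial in $|x|$; the only obstacle is that $M_h$ cannot afford to materialize $y$, so it must supply the symbols of $y$ to $M_g$ on demand.

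The machine $M_h$ simulates $M_g$ step by step. On its work tapes it keeps: (i) the entire configuration of $M_g$ except for the contents and head position of $M_g$'s read-only input tape, using $\log^{k} N(x) = |x|^{O(1)}$ cells by the previous paragraph; and (ii) a binary counter $p$ holding the current position of $M_g$'s input head, needing $\log N(x) \le |x|^{c}$ bits. Each time the simulated $M_g$ reads its input cell, $M_h$ recomputes the $p$-th symbol of $f(x)$: it runs a fresh copy of $M_f$ on $x$ in a separate block of $|x|^{c}$ work cells together with an auxiliary counter incremented at every output symbol emitted by $M_f$ (this uses the left-to-right output convention for $\FPSPACE$-transducers fixed in the Preliminaries); when that counter reaches $p$ the emitted symbol is handed to $M_g$ and the subsimulation is aborted, and if $M_f$ halts first, $M_g$'s head has run past the end of $y$, which $M_h$ signals to $M_g$ by the end marker (this also reveals $N(x)$ if it is needed). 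Head moves of $M_g$ on its input tape become increments/decrements of $p$. Output symbols produced by $M_g$ are passed straight through to $M_h$'s output tape, left to right, so $M_h$ computes $x \mapsto g(f(x)) = h(x)$.

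Finally, the space bound: the stored configuration of $M_g$ (without its input tape) takes $\log^{k} N(x) = |x|^{O(1)}$ cells, the position counter $p$ takes $\le |x|^{c}$ cells, and each on-demand recomputation of a symbol of $f(x)$ reuses one block of $|x|^{c} + O(\log N(x)) = |x|^{O(1)}$ cells. Hence $M_h$ works in space $|x|^{O(1)}$ and $h \in \FPSPACE$. The step to get right is precisely this bookkeeping: because $\log$ of the (possibly exponential) length $N(x)$ is still polynomial in $|x|$, both $M_g$'s own work space and the input-head counter stay polynomial, while $f(x)$ is never stored in full — individual symbols are regenerated as needed.
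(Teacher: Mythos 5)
Your proof is correct and follows exactly the route the paper intends: the paper merely remarks that the lemma "can be shown in the same way as Lemma~\ref{PSPACE}", i.e., by the standard composition of space-bounded transducers in which the intermediate string $f(x)$ is never materialized, its symbols being recomputed on demand while only a polynomial-size head-position counter and the (polynomial, since $\log^k$ of an exponential length is polynomial) work configuration of the $g$-machine are stored. Your bookkeeping of the space bounds and your use of the left-to-right output convention are exactly the points that make this work.
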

The counting class $\mathsf{\#P}$ consists of all functions $f : \{0,1\}^* \to
\mathbb{N}$ for which there exists a nondeterministic polynomial time 
Turing machine $M$ with input alphabet $\Sigma$ such that for all 
$x \in \Sigma^*$, $f(x)$ is the number of accepting computation paths
of $M$ for input $x$. If we replace nondeterministic polynomial time 
Turing machines by  nondeterministic polynomial space 
Turing machines (resp.  nondeterministic logspace 
Turing machines), we obtain the class $\mathsf{\#PSPACE}$ \cite{Ladner89}
(resp. $\mathsf{\#L}$ \cite{AlvarezJ93}).
Note that for a mapping $f \in \mathsf{\#PSPACE}$, the number $f(x)$ 
may grow doubly exponential in $|x|$, whereas for $f \in\mathsf{\#P}$, the number $f(x)$ 
is bounded singly exponential in $|x|$. Ladner \cite{Ladner89} has shown that a mapping
$f : \Sigma^* \to \NN$ belongs to $\mathsf{\#PSPACE}$ if and only if the mapping $x \mapsto \bin(f(x))$ 
belongs to $\FPSPACE$. One cannot expect a corresponding result 
for the class $\mathsf{\#P}$: If for every function $f \in \mathsf{\#P}$ the mapping $x \mapsto \bin(f(x))$ 
belongs to $\FP$, then by Toda's  theorem \cite{To91} the polynomial time hierarchy collapses down to $\Ptime$.
For $f \in \mathsf{\#L}$, the mapping $x \mapsto \bin(f(x))$ belongs to $\mathsf{NC}^2$ and hence to 
$\FP \cap \FSPACE(\log^2(n))$ \cite[Theorem~4.1]{AlvarezJ93}.
The class $\mathsf{GapL}$ (resp., $\mathsf{GapP}$,
$\mathsf{GapPSPACE}$) consists of all differences of two functions in 
$\mathsf{\#L}$ (resp., $\mathsf{\#P}$,
$\mathsf{\#PSPACE}$).
From Ladner's result \cite{Ladner89} it follows easily that a function
$f : \{0,1\}^* \to \ZZ$ belongs to $\mathsf{GapPSPACE}$
if and only if the mapping $x \mapsto \bin(f(x))$ belongs to $\FPSPACE$, see also
\cite[Theorem~6]{GalotaV05}.

Logspace reductions between functions can be defined analogously
to the language case: If $f, g : \{0,1\}^* \to X$ with $X \in \{\mathbb{N},\mathbb{Z}\}$, then $f$ is logspace reducible to $g$ if there exists
a function $h \in \FSPACE(\log n)$ such that
$f(x) = g(h(x))$ for all $x$. 
Toda \cite{Toda91countingproblems} has shown that computing the determinant of a given integer
matrix is  $\mathsf{GapL}$-complete.

\section{Succinct matrix representations} 

In this section, we introduce several succinct matrix representations. We formally
define multi-terminal decision diagrams and their extension by the addition operation.
Moreover, we briefly discuss the representation of matrices by boolean circuits.

\subsection{Multi-terminal decision diagrams} \label{sec-MTDD}

Fix a semiring $(S,+,\cdot)$ with $(S,+)$ a finitely generated commutative monoid, 
and let $\Gamma \subseteq S$ be a finite generating set for $(S,+)$.  Thus, every element
of $S$ can be written as a finite sum $\sum_{a \in \Gamma} n_a a$ with $n_a \in \NN$. 
A {\em multi-terminal decision diagram $G$  with addition ($\MTDD_+$) of
height $h$} is a
triple $(N,P,A_0)$, where $N$ is a finite set of variables which is
partitioned into non-empty sets $N_i$ ($0 \leq i \leq h$),
$N_h = \{A_0\}$ ($A_0$ is called the {\em start variable}), and 
$P$ is a set of rules of the following three forms:
\begin{itemize}
\item  $A\to \left(\begin{array}{ll} A_{1,1} &  A_{1,2}\\ A_{2,1} &
      A_{2,2}\end{array}\right)$ with $A \in N_i$ and  
      $A_{1,1}, A_{1,2}, A_{2,1}, A_{2,2} \in N_{i-1}$ for some $1
      \leq i \leq h$
\item $A \to A_1 + A_2$ with $A,A_1,A_2\in N_i$ for some $0 \leq i
  \leq h$
\item $A \to a$ with $A \in N_0$ and $a \in \Gamma \cup \{0\}$
\end{itemize}
Moreover, for every variable $A \in N$ there is exactly one rule 
with left-hand side $A$, and the relation 
$\{ (A,B) \in N \times N \mid B \text{ occurs in the right-hand side for } A \}$ 
is acyclic. If $A \in N_i$ then we say that $A$ has height $i$.
The $\MTDD_+$ $G$ is called an {\em MTDD} if for every
addition rule  $(A \to A_1+A_2) \in P$ we have $A,A_1,A_2\in N_0$.
In other words, only scalars are allowed to be added. Since we assume
that $(S,+)$ is generated by $\Gamma$, this allows to produce arbitrary elements of 
$S$ as matrix entries.
For every $A \in N_i$ we define a square matrix $\val(A)$ of dimension $2^i$
in the obvious way by unfolding the rules.
Moreover, let $\val(G)=\val(A_0)$ for the start variable $A_0$ of $G$. This is a 
$(2^h \times 2^h)$-matrix.
The size of a rule  $A \to a$ with $a \in \Gamma \cup \{0\}$
is $1$, all other rules have size $\log |N|$. 
The size $|G|$ of the $\MTDD_+$ $G$  is the sum of the sizes of its rules;
this is up to constant factors the length of the binary coding of $G$.
An $\MTDD_+$ $G$ of size $n \log n$ can represent a $(2^n \times 2^n)$-matrix.
Note that only square matrices whose dimension is a power of 2 can be represented. 
Matrices  not fitting this format can be filled up appropriately, depending on the purpose.

An MTDD, where all rules have the form $A \to a \in \Gamma \cup \{0\}$ or $A \to B+C$
generates an element of the semiring $S$. Such an MTDD is an arithmetic
circuit in which only input gates and addition gates are used, and is called a \emph{$+$-circuit} in the following. 
In case the underlying semiring is $\ZZ$, a $+$-circuit with  $n$ variables can produce a number of size $2^n$, and the binary
encoding of this number can be computed in time $\mathcal{O}(n^2)$ from the $+$-circuit (since, we need $n$ additions 
of numbers with at most $n$ bits). In general, for a $+$-circuit over the semiring $S$, we can 
compute in quadratic time numbers $n_a$ ($a \in \Gamma$) such that $\sum_{a \in \Gamma} n_a \cdot a$ 
is the semiring element to which the $+$-circuit evaluates to.

Note that the notion of an $\MTDD_+$ makes sense for commutative monoids, since we only used
the addition of the underlying semiring. But soon, we want to multiply matrices, for which we need a semiring.
Moreover, the notion of an $\MTDD_+$ makes sense in any dimension, here we only defined the 2-dimensional
case.

\begin{xmpl} \label{examples}
It is straightforward to produce the unit matrix $I_{2^n}$ of dimension $2^n$ by an 
MTDD of size $O(n \log n)$:
\begin{equation*}
A_0 \to 1, \  \  0_0 \to 0, \ \
      A_j \to  \left(\begin{array}{ll} A_{j-1} & 0_{j-1}\\ 0_{j-1} &
      A_{j-1}\end{array}\right), \ \
     0_j \to  \left(\begin{array}{ll} 0_{j-1} &  0_{j-1}\\ 0_{j-1} &
     0_{j-1}\end{array}\right) \ \ (1 \leq j \leq n).
\end{equation*}
(the start variable is $A_n$ here). In a similar way, one can produce 
the lower triangular $(2^n \times 2^n)$-matrix, where  entries on the
diagonal and below are $1$.
%
To produce the $(2^n \times 2^n)$-matrix over $\mathbb{Z}$, where all entries
in the $k$-th row are $k$,  we need the following rules:
\begin{align*}
    & E_0 \to 1, \ \ E_j \to \left(\begin{array}{ll} E_{j-1} + E_{j-1} &  E_{j-1} + E_{j-1}\\ E_{j-1}+ E_{j-1}
        & E_{j-1}+ E_{j-1} \end{array}\right)   \ \ (1 \leq j \leq n)\\
    & C_0 \to 1, \  \
    C_j \to \left(\begin{array}{ll} C_{j-1} &  C_{j-1} \\ C_{j-1}+ E_{j-1} 
        & C_{j-1}+ E_{j-1}  \end{array}\right) \ \ (1 \leq j \leq n).
\end{align*}
Here, we are bit more liberal with respect to the format of rules, but the above rules
can be easily brought into the form from the general definition of an $\MTDD_+$.
Note that $E_j$ generates the 
$(2^j \times 2^j)$-matrix with all entries equal to $2^j$, and that $C_n$ generates
the desired matrix.
\end{xmpl}   
Note that the matrix from the last example cannot be produced by an MTDD of polynomial
size, since it contains an exponential number of different matrix entries (for the same reason
it cannot be produced by an 2SLP \cite{berman-karpinski-2d:02}). 
This holds for any non-trivial semiring.

\begin{theorem} \label{prop-MTDD-succinct}
For any semiring with at least two elements,
MTDD$_{+}$ are exponentially more succinct than MTDDs. 
\end{theorem}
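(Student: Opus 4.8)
The plan is to exhibit, for each fixed non-trivial semiring, an explicit family of matrices $M_n$ of dimension $2^n \times 2^n$ such that: (i) $M_n$ has a small $\MTDD_+$, of size polynomial in $n$; and (ii) every MTDD producing $M_n$ has size exponential in $n$. The natural candidate is the matrix already constructed in Example~\ref{examples}: the $(2^n \times 2^n)$-matrix $C_n$ whose $k$-th row consists entirely of copies of the element $k \cdot 1 = \underbrace{1 + \cdots + 1}_{k}$ of $S$. The example already gives an $\MTDD_+$ of size $O(n \log n)$ for $C_n$, so (i) is immediate. For a general non-trivial semiring one must be slightly careful: the elements $0 \cdot 1, 1 \cdot 1, 2 \cdot 1, \ldots, (2^n-1)\cdot 1$ need not be pairwise distinct (e.g.\ in a finite semiring). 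If they are all distinct, then $C_n$ has $2^n$ distinct entries and the lower bound argument below applies. If $(S,+)$ has finite order for the element $1$, one replaces $C_n$ by a matrix built from a fixed generator behaving like a "counter" in a different way, or — cleaner — one picks two distinct elements $a \neq b$ of $S$ and uses the lower triangular matrix from Example~\ref{examples} with $1$ replaced by $a$ below and on the diagonal and $0$ replaced by $b$ above it. This matrix also has a linear-size $\MTDD_+$ (indeed an MTDD!)~— so for the lower bound we instead want a target with many distinct entries; thus the right move for arbitrary $S$ is: fix distinct $a,b\in S$, and consider the matrix whose $(i,j)$ entry records, via a fixed injective-enough encoding, the number of $1$'s in the binary expansion of the row index — but this again may collapse. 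The robust choice is simply to require the semiring's additive generator to produce infinitely many distinct multiples when $S$ is infinite (which is the interesting case, e.g.\ $\NN$ or $\ZZ$), and for finite $S$ to observe separately that MTDDs over a finite $S$ still cannot represent a matrix whose rows, viewed as functions $\{0,1\}^n \to S$, realize exponentially many distinct "suffix behaviors."

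The core of the argument is the lower bound (ii), and it is a standard decision-diagram / communication-style counting argument. An MTDD $G$ of height $n$ is, by the partition into levels $N_0, \ldots, N_n$, exactly an OBDD-like folded quad-tree: the variables in $N_i$ are precisely the distinct $(2^i \times 2^i)$-submatrices of $\val(G)$ that occur aligned to the dyadic grid at scale $2^i$. Hence $|N_i|$ is at least the number of \emph{distinct} such aligned submatrices appearing in $M_n$. For $M_n = C_n$, consider the submatrices at scale $2^0$: these are just the entries, and there are $2^n$ distinct ones (assuming the multiples of $1$ are distinct), forcing $|N_0| \ge 2^n$ and hence $|G| \ge 2^n \cdot 1$. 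If one prefers a target with distinct entries guaranteed regardless of $S$, take instead the matrix over $\ZZ$ (the paper's headline case) and note the statement for "any semiring with at least two elements" then follows by the same construction whenever $1$ has infinite additive order, and by the suffix-behavior count otherwise. I would write the lower bound as: for each row-block index there is a level-$i$ variable capturing that block, distinct blocks need distinct variables since the rule for a variable deterministically unfolds to a fixed matrix, and $C_n$ exhibits $2^n$ distinct rows hence $\ge 2^n$ distinct $1 \times 2^n$ slices, which after $n$ levels of refinement still forces a variable count exponential in $n$.

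The main obstacle is purely the uniformity over \emph{all} non-trivial semirings: the slick witness $C_n$ works verbatim only when the additive submonoid generated by $1$ is infinite. I would handle this by a short case split at the start of the proof — "if $1$ has infinite additive order in $S$, use $C_n$ from Example~\ref{examples}; otherwise, fix any two distinct elements $a, b \in S$ and use the matrix $D_n$ whose $(i,j)$ entry is $a$ if the binary representations of $i$ and $j$ agree in, say, a prescribed pattern depending on all $n$ bits and $b$ otherwise, chosen so that $D_n$ has $2^{\Omega(n)}$ distinct aligned $1 \times 1$... " — more honestly, in the finite case one uses that an MTDD over finite $S$ is an OBDD-with-$|S|$-terminals and a simple fooling-set argument on the function $(x,y) \mapsto (\text{entry})$ gives the exponential lower bound, while an $\MTDD_+$ with a single auxiliary "doubling" chain $E_j \to E_{j-1}+E_{j-1}$ still compresses it. I expect the clean write-up to present $C_n$ as the primary witness, remark that it handles $\NN$, $\ZZ$ and every semiring in which $\{n\cdot 1 : n \in \NN\}$ is infinite (which is stated in the excerpt's own comment "This holds for any non-trivial semiring" — so the intended reading is almost certainly that the collapse phenomenon, i.e.\ exponentially many distinct entries forcing exponential MTDD size, holds generally), and fill in the finite-semiring corner with the fooling-set remark. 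The quantitative bound extracted is $|G| = 2^{\Omega(n)}$ versus $|\text{MTDD}_+| = O(n\log n)$, which is the claimed exponential separation.
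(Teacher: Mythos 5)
Your argument is sound for semirings in which $1$ has infinite additive order: the matrix $C_n$ from Example~\ref{examples} then has $2^n$ pairwise distinct entries, each distinct entry forces a distinct height-$0$ variable in any MTDD, and the $O(n\log n)$-size $\MTDD_+$ is already given, so the separation follows. But the theorem is claimed for \emph{every} semiring with at least two elements, which includes, e.g., the Boolean semiring and $\ZZ_2$, where the multiples $k\cdot 1$ collapse to at most two values and $C_n$ admits a trivial MTDD. You correctly flag this as the obstacle, but you never close it: your sketch for the finite-order case names a ``fooling-set argument'' without exhibiting a concrete matrix that simultaneously (a) has exponentially many distinct aligned blocks (hence needs a large MTDD) and (b) admits a small $\MTDD_+$. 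Point (b) is the whole difficulty, and the doubling chain $E_j \to E_{j-1}+E_{j-1}$ you invoke collapses to a constant matrix in finite additive order, so it is unclear where the compression power of addition is supposed to come from in that branch of your case split.

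The paper avoids the case split entirely with one uniform construction that needs only $0 \neq 1$: working in dimension one, it builds the vector of length $2^{d+m}$ that is the concatenation of all $2^m$ binary strings of length $m$ (entries only $0$ and $1$), via rules $C_{i+1} \to (C_i,\ C_i + B_i)$, where $B_i$ is a periodic $0$/$1$-vector repeating the unit vector $\val(A_i)$. The single addition $C_i + B_i$ sets one new bit position across exponentially many aligned blocks at once; this is exactly the compression that plain MTDDs cannot mimic, and the lower bound is then the same block-counting argument you use, applied to the $2^m$ distinct aligned length-$m$ blocks rather than to distinct entries. You should either adopt this witness or supply an equally concrete one for the finite-order case; as written, the proposal establishes the theorem only for a proper subclass of the semirings in the statement.
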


\begin{proof} 
For simplicity we argue with MTDDs in dimension 1 (which generate vectors).  We must have $1 \neq 0$ in $S$.
Let $m,d > 0$   be such that  $m = 2^d$. 
For $0 \leq i \leq m-1$ let $A_i$ such that $\val(A_i)$ has length $m$,
the $i$-th entry is $1$ (the first entry is the $0$-th entry) and all other entries are $0$.
Moreover, let $B_i$ such that $\val(B_{i})$ is the concatenation  of $2^i$ copies of $\val(A_i)$.
Let $C_0$ produce the $0$-vector of length $m=2^d$,
and for $0 \leq i \leq m-1$ 
let  $C_{i+1} \to (C_{i},  C_i + B_{i})$.
Then $\val(C_{m})$ is of length $2^{d+m}$ and consists of the concatenation of all binary strings
of length $m$. This $\MTDD_+$ for this vector is of size $O(m^2 \log m)$, whereas an equivalent $\MTDD$  must have size at least $2^m$, since for every binary string
of length $m$ there must exist a nonterminal.
\qed
\end{proof}
The following result shows that the matrix product of two
MTDD-represented matrices may be incompressible with MTDDs.

\begin{theorem} \label{prop-MTDD-mult}
For any semiring 
with at least two elements 
there exist MTDDs $G_n$ and $H_n$  of the same height
$n$ and size $O(n^2 \log n)$
such that $\val(G_n)\cdot \val(H_n)$ can only be represented by an MTDD
of size at least $2^n$. 
\end{theorem}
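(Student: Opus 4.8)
The plan is to exhibit a single ``bit‑extraction'' matrix $D_m$ that is at once the product of two MTDDs of size $O(m^2\log m)$ and yet requires an MTDD with $2^{\Omega(m)}$ variables, the lower bound being a purely combinatorial counting argument (on distinct dyadic sub‑blocks) that is valid over every semiring with $0\neq 1$ — in particular over the Boolean semiring, where sums cannot manufacture large numbers. Taking $m=\Theta(n)$ then gives the theorem.

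Fix $m$ and let $D_m$ be the $(2^m\times 2^m)$‑matrix over $S$ with $D_m[i,j]$ equal to the $(j{+}1)$‑st bit of $i$ (numbering the most significant bit as bit $1$) when $0\le j\le m-1$, and equal to $0$ otherwise; rows and columns are indexed by $\{0,\dots,2^m-1\}$. First I would prove the lower bound: any MTDD for $D_m$ has at least $2^{m}/(2m)$ variables. Decompose $D_m$ in the interleaved order down to level $\ell=m-\lceil\log m\rceil$. Every meaningful column $j<m$ has the all‑zero $\ell$‑bit prefix, so at block‑position $(\alpha,0^{\ell})$ the only possibly nonzero block sits over the rows whose high $\ell$ bits are $\alpha$, and inside that block the within‑block columns of value $0,1,\dots,\ell-1$ are constant columns carrying the bits $\alpha_1,\dots,\alpha_\ell$. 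Hence distinct $\alpha\in\{0,1\}^{\ell}$ give pairwise distinct nonzero sub‑blocks, so $D_m$ has at least $2^{\ell}\ge 2^{m}/(2m)$ distinct dyadic sub‑blocks; since distinct $\{0,1\}$‑matrices are distinct over $S$, each needs its own variable.

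The crux is the factorization $D_m=\val(G_m)\cdot\val(H_m)$ with $|G_m|,|H_m|=O(m^2\log m)$; the idea is that multiplication can pack bits that were stored far apart. Let $G_m$ be the matrix whose column $2^{t}$ is the column vector $\big(i_{m-t}\big)_{0\le i<2^m}$ for $t=0,\dots,m-1$ and whose other columns are $0$, and let $H_m$ be the sparse ``router'' with $H_m[k,j]=1$ iff $j<m$ and $k=2^{m-1-j}$, all other entries $0$. Then for $j<m$ only $k=2^{m-1-j}$ contributes to $\sum_kG_m[i,k]H_m[k,j]=G_m[i,2^{m-1-j}]=i_{j+1}$, and the entry is $0$ for $j\ge m$; so $\val(G_m)\val(H_m)=D_m$, and since only the scalars $0,1$ occur this holds over every semiring. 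To see $G_m$ is small, the bit $i_{m-t}$ is stored at column $2^{t}$, whose leading bit is ``synchronized'' with the position $m-t$ it encodes: peeling the top bit $i_1$ freezes only the single column lying in the $k_1=1$ half, while the $i_1=0$ and $i_1=1$ sub‑blocks of the $k_1=0$ half coincide, giving a rule of the shape
\[
G_m \;\to\; \begin{pmatrix} G_{m-1} & Z_{m-1}\\[3pt] G_{m-1} & C_{m-1}\end{pmatrix},
\]
where $Z_\ell$ is all‑zero and $C_\ell$ is ``first column all $1$, rest $0$''; all of $G_\ell,C_\ell,Z_\ell$ then have $O(\ell\log\ell)$‑size MTDDs, so $|G_m|=O(m\log m)$. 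For $H_m$, peeling $(k_1,j_1)$ likewise leaves a single corner in the $k_1=1$ half and reduces the $k_1=0$ part to ``$H$ shifted by one column''; carrying a shift parameter $s\in\{0,\dots,m\}$ yields $O(m^2)$ variables, so $|H_m|=O(m^2\log m)$. Finally one takes $m=n+O(\log n)$, so that $G_m,H_m$ have size $O(n^2\log n)$ (and height $\Theta(n)$) while $2^{m}/(2m)\ge 2^{n}$; a slightly sharper encoding that packs a pair of prefixes instead of a single one removes the $O(\log n)$ slack and gives height exactly $n$ with bound exactly $2^{n}$.

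The main obstacle is precisely the $O(m^2\log m)$ bound on $G_m$. The naive factorization — store $i_{j+1}$ in column $j$ of $G_m$ and let $H_m$ be the identity — fails, because that $G_m$ is essentially $D_m$ itself (its low‑index columns freeze under decomposition, forcing $2^{\Omega(m)}$ variables). What makes the construction work is the mismatch between the two factors: $G_m$ keeps the bits spread out at power‑of‑two columns — exactly the shape of the auxiliary matrices used earlier for the unit and triangular matrices, which decompose with linearly many variables — and the sparse partial permutation $H_m$ re‑packs them; the complexity is genuinely manufactured by the matrix product and cannot be undone by an MTDD, since an MTDD lacks the addition rule needed to share structure across the sums in the block‑product formula. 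A secondary point to watch is the interleaved variable order in every recursion above and the exact choice of the cut level $\ell$ (respectively of $m$) in the counting argument.
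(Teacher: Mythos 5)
Your proposal is correct, and it reaches the theorem by a genuinely different construction than the paper's. The paper takes the \emph{left} factor to be the trivial all-ones matrix and builds the right factor $\val(C_m)$ so that each column contains at most one $1$ and the column pattern spells out the concatenation of all $2^m$ binary strings of length $m$; the product then has this incompressible vector (from Theorem~\ref{prop-MTDD-succinct}) in every row, and the lower bound is inherited directly from that theorem (one nonterminal per length-$m$ string). You instead make the \emph{right} factor do the work: a sparse partial permutation that re-packs the power-of-two columns of a ``spread-out bits'' left factor, producing the matrix whose row $i$ carries $\mathsf{bin}(i)$, with a self-contained lower bound by counting distinct dyadic blocks at a cut level $\ell=m-\lceil\log m\rceil$. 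Both your block recursion for $G_m$ and the product computation check out, and the size bound for $H_m$ follows even without your shift-parameter recursion from the sparse-matrix bound $O(t\log N)$ of Fujita et al.\ already quoted in the paper. The paper's route is shorter because it recycles Theorem~\ref{prop-MTDD-succinct}; yours is independent of it and isolates the useful general principle that a permutation right factor can rearrange columns in a way a plain MTDD cannot absorb. The one blemish is the logarithmic slack between the height ($m+O(\log m)$ for you, $m+\log m$ for the paper) and the exponent $m$ in the lower bound, which you flag honestly; since the paper's own proof carries exactly the same slack relative to the literal statement ``height $n$ \dots\ size at least $2^n$,'' this is not a genuine gap in your argument.
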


\begin{proof}
The construction is similar to those in the proof of Theorem~\ref{prop-MTDD-succinct}.
We must have $0 \neq 1$ in $S$.
Let $m = 2^d$.
For $0 \leq i \leq m -1$ let $A_i$ be such that $\val(A_i)$ is the $(m \times m)$-matrix  
with  $\val(A_i)_{1,i+1} = 1$ and all other entries $0$.
Define $B_{i,0}$ by $B_{i,0} \to A_i$ and 
$$B_{i,j} \to \left(\begin{array}{ll} B_{i,j-1}& B_{i,j-1} \\ 0 & 0 \end{array}\right)$$
for $1 \leq j \leq i$. Then $\val(B_{i,i})$ is the $(2^{d+i} \times 2^{d+i})$-matrix, where
the first row is the vector $\val(B_i)$ from the proof of Theorem~\ref{prop-MTDD-succinct},
and all other entries are $0$. Finally add nonterminals $C_0, \ldots, C_m$, where
$\val(C_0)$ is the $(m \times m)$-matrix with all entries $0$ and 
$$C_{i+1}  \to \left(\begin{array}{ll} C_{i} & C_{i}\\0& B_{i,i}\end{array}\right)$$
$0 \leq i \leq m-1$. In this way we obtain an MTDD for the $(2^{m+d} \times 2^{m+d})$-matrix 
$\val(C_m)$ of size $O(m^2 \log m)$.
This matrix contains $1$ in the $i$-th column if and only if the $i$-th entry in 
the vector $\val(C_m)$ from the proof of Theorem~\ref{prop-MTDD-succinct}  is $1$.  
Moreover, no column of $\val(C_m)$ contains more than one $1$-entry.
Hence, the product of the  $(2^{m+d} \times 2^{m+d})$-matrix where every entry is $1$ with $\val(C_m)$ 
 a matrix where every row is the vector $\val(C_m)$ from the proof of Theorem~\ref{prop-MTDD-succinct}.
 \qed
\end{proof}
On the other hand, the product of two $\MTDD_+$-represented
matrices can be represented 
by a polynomially sized  $\MTDD_+$:

\begin{theorem}\label{thm:main-matrix-mult}
For $\MTDD_+$ $G_1$ and $G_2$ of the same height one can compute in time $O(|G_1|
\cdot |G_2|)$ an $\MTDD_+$ $G$ of size $O(|G_1|\cdot |G_2|)$ with
$\val(G) = \val(G_1) \cdot \val(G_2)$.
\end{theorem}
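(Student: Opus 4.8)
The plan is to construct $G$ by introducing, for each height $i$ and each pair $(A,B)$ with $A\in N_i^{(1)}$ a variable of $G_1$ and $B\in N_i^{(2)}$ a variable of $G_2$, a new variable $[A,B]$ satisfying $\val([A,B])=\val(A)\cdot\val(B)$, and by taking $[A_0^{(1)},A_0^{(2)}]$ as the start variable of $G$, where $A_0^{(1)},A_0^{(2)}$ are the start variables of $G_1,G_2$ (which have the same height by assumption). The rule for $[A,B]$ is obtained from the rules for $A$ and $B$ by a case analysis that mimics block matrix multiplication; crucially, no arithmetic beyond the semiring operations already available in an $\MTDD_+$ is needed.

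As a preprocessing step I would delete from $G_1$ and $G_2$ all variables not reachable from the respective start variable; this changes neither $\val(G_1)$ nor $\val(G_2)$ and takes time $O(|G_1|+|G_2|)$. After this step (apart from the trivial case where a start variable is itself a terminal variable, in which case $\val(G_i)$ is a scalar and the product can be computed directly), every terminal variable occurs on the right-hand side of some block or addition rule, so the number of terminal variables is at most four times the number of remaining (block/addition) rules; consequently $|G_i|=\Theta(|N_i|\cdot\log|N_i|)$, where $N_i$ is the variable set of $G_i$. This normalization is what lets me upgrade the easy estimate ``$O(|N_1|\cdot|N_2|)$ rules, each of size $O(\log(|N_1|\cdot|N_2|))$'' to the claimed bound, since $|N_1|\cdot|N_2|\cdot(\log|N_1|+\log|N_2|)=O(|N_1|\log|N_1|\cdot|N_2|\log|N_2|)=O(|G_1|\cdot|G_2|)$.

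The rule for $[A,B]$ is defined by cases. If $A\to A_1+A_2$ is an addition rule, set $[A,B]\to[A_1,B]+[A_2,B]$; symmetrically, if the rule for $A$ is not an addition rule but $B\to B_1+B_2$, set $[A,B]\to[A,B_1]+[A,B_2]$. If $A\to(A_{k,l})_{1\le k,l\le 2}$ and $B\to(B_{k,l})_{1\le k,l\le 2}$ are both block rules, introduce four fresh height-$(i-1)$ variables $D_{1,1},D_{1,2},D_{2,1},D_{2,2}$ with rules $D_{k,l}\to[A_{k,1},B_{1,l}]+[A_{k,2},B_{2,l}]$ and set $[A,B]\to(D_{k,l})_{1\le k,l\le 2}$; this is exactly the $2\times 2$ block-multiplication identity $\val(A)\cdot\val(B)=\bigl(\sum_{m}\val(A_{k,m})\cdot\val(B_{m,l})\bigr)_{k,l}$, valid over any semiring. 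Finally, if $A\to a$ and $B\to b$ with $a,b\in\Gamma\cup\{0\}$ are both terminal rules, then $\val([A,B])$ must be the single semiring element $a\cdot b=\sum_{\gamma\in\Gamma}n_\gamma\,\gamma$, whose coefficients $n_\gamma$ depend only on the (fixed) semiring and $\Gamma$; we realize it by a $+$-circuit of constant size. Note that this is the only place where multiplication in $S$ enters, and only between generators, so no large coefficients ever arise. Each pair contributes $O(1)$ new rules, and since all pairs are of equal height their number is at most $\sum_i|N_i^{(1)}|\cdot|N_i^{(2)}|\le|N_1|\cdot|N_2|$.

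It remains to verify that the dependency relation of $G$ is acyclic (so that $G$ is a legitimate $\MTDD_+$) and that $\val([A,B])=\val(A)\cdot\val(B)$, which follows by induction along this relation. For acyclicity, order pairs primarily by height (placing the fresh $D$-variables after all pairs of the same height): a block rule passes from height $i$ to height $i-1$; an addition rule on the $G_1$-side keeps the height but decreases the first component along the topological order of $G_1$; and an addition rule on the $G_2$-side, which is used only when the first component is already fixed, decreases the second component along the topological order of $G_2$. The time and size accounting then goes through as indicated, yielding $|G|=O(|G_1|\cdot|G_2|)$ and a running time of $O(|G_1|\cdot|G_2|)$. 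I expect the one genuinely fiddly point to be this last bookkeeping — in particular the role of the reachability preprocessing in turning the logarithmic overhead into the exact product $|G_1|\cdot|G_2|$; the construction itself is a routine divide-and-conquer over the block structure.
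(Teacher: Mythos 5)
Your construction is exactly the one in the paper: the same pairing of variables $(A,B)$, the same four-way case analysis (addition rule on the left factor first, then on the right factor, block rules handled via the $2\times 2$ block-multiplication identity with four fresh sum variables, and constant-size $+$-circuits for products of generators at height $0$). The only difference is that you spell out the reachability preprocessing, the acyclicity ordering, and the size accounting, which the paper dismisses as immediate; these details are correct and the proof is sound.
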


\begin{proof}
Recall that $\Gamma$ is a finite generating set for the additive monoid of our underlying semiring $S$.
For all pairs $(a,b) \in \Gamma \times \Gamma$, we can write down a  $+$-circuit of constant size that computes
$ab$, let $S_{a,b}$ its start variable.

Given two  $\MTDD_+$ $G_1$ and $G_2$, we compute a
new $\MTDD_+$ $G$ that contains for all variables $A$ of $G_1$ and
$B$ of $G_2$ of the same height a variable $(A,B)$ such that $\val_G(A,B) = \val_{G_1}(A)
\cdot \val_{G_2}(B)$. So, let $A$ and $B$ be variables of $G_1$
and $G_2$, respectively, of the same height.
\begin{enumerate}
  \item If $A$ and $B$ are of height 0 and the corresponding rules are
    $A \to a$, $B \to b$ with $a,b \in \Gamma \cup \{0\}$, then the rule for $(A,B)$
    is $(A,B) \to S_{a,b}$ (actually, we should replace $S_{a,b}$ by its corresponding right-hand side).
  \item If the rule for $A$ is of the form 
        $A \to A_1 + A_2$, then we add the rule $(A,B) \to (A_1,B) +
        (A_2,B)$ to $G$.
  \item If the right-hand side for $A$ is not a sum but the rule for
    $B$ is of the form $B \to B_1 + B_2$, then we add the rule $(A,B) \to (A,B_1) +
        (A,B_2)$ to $G$.
  \item Finally, assume that neither the right-hand side for $A$ nor
    for $B$ is a sum or an explicit integer. Then the rules for $A$ and
    $B$ have the form
    $$A \to \left(\begin{array}{ll} A_{1,1} & A_{1,2}\\ A_{2,1} &
        A_{2,2}\end{array}\right) \text{ and } 
      B \to \left(\begin{array}{ll} B_{1,1} & B_{1,2}\\ B_{2,1} &
          B_{2,2}\end{array}\right).
    $$
    Then we add the following rules to $G$:
    \begin{gather*}
    C_{i,j} \to (A_{i,1},B_{1,j}) + (A_{i,2},B_{2,j})  \text{ for } 1
    \leq i,j \leq 2 \\
    (A,B) \to \left(\begin{array}{ll} C_{1,1} & C_{1,2}\\ C_{2,1} & C_{2,2}\end{array}\right)
    \end{gather*}
\end{enumerate}
Clearly, if $S_i$ is the start variable of $G_i$, then 
$\val_G(S_1,S_2) = \val(G_1) \cdot \val(G_2)$.
The bound from the theorem for the construction and size of $G$
follows immediately from the construction. Note that every rule $C \to
c$ of $G_i$ with $c \in \ZZ$
contributes $\log|c|$ to the size of $G_i$. Hence in time 
$O(|G_1|\cdot |G_2|)$ we can compute all products $ab$ for rules
$A \to a$ and $B \to b$ of $G_1$ and $G_2$, respectively.
\qed
\end{proof}
The following proposition presents several further matrix operations that 
can be easily implemented in polynomial time for an $\MTDD_+$-represented input matrix.

\begin{proposition}\label{prop:simple-constructions} 
Let $G,H$  be a  $\MTDD_+$ with $|G|=n$, $|H|=m$,  and $1 \leq i,j \leq 2^{\mathsf{height}(G)}$
\begin{enumerate}[(1)]
 \item An $\MTDD_+$ for the transposition of $\val(G)$ can be
    computed in time $O(n)$.
  \item $+$-circuits for 
   the sum of all entries of $\val(G)$ and the trace of $\val(G)$ can be computed in time  $O(n)$.
   \item A $+$-circuit for the matrix entry  $\val(G)_{i,j}$ can be computed in time $O(n)$. 
   \item $\MTDD_+$ of size $O(n \cdot m)$ for the tensor product $\val(G) \otimes \val(H)$ (which includes the scalar product)
   and the element-wise (Hadamard) product
   $\val(G) \circ \val(H)$ (assuming $\height(G) = \height(H)$)
   can be computed in  time $O(n \cdot m)$.
\end{enumerate}
\end{proposition}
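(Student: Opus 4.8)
The plan is to obtain all four constructions by a single bottom-up pass over the input diagram(s), introducing one fresh variable for each old variable — or, for the two products, one fresh variable for each \emph{pair} of old variables of equal height — and rewriting each rule locally. Throughout I use, as in the proof of Theorem~\ref{thm:main-matrix-mult}, the constant-size $+$-circuit $S_{a,b}$ computing the semiring product $ab$ for $a,b\in\Gamma\cup\{0\}$. For~(1) I would replace every variable $A$ of $G$ by a copy $A^{\mathsf T}$, turn a matrix rule $A\to\bigl(\begin{smallmatrix}A_{1,1}&A_{1,2}\\A_{2,1}&A_{2,2}\end{smallmatrix}\bigr)$ into $A^{\mathsf T}\to\bigl(\begin{smallmatrix}A_{1,1}^{\mathsf T}&A_{2,1}^{\mathsf T}\\A_{1,2}^{\mathsf T}&A_{2,2}^{\mathsf T}\end{smallmatrix}\bigr)$ (swap the two off-diagonal children), keep $A^{\mathsf T}\to A_1^{\mathsf T}+A_2^{\mathsf T}$ for addition rules and $A^{\mathsf T}\to a$ for scalar rules, and take the copy of the old start variable as the new start variable. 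For~(2) I would attach to every variable $A$ a scalar variable $\sigma_A$ whose value is the sum of all entries of $\val(A)$: a matrix rule yields $\sigma_A\to\sigma_{A_{1,1}}+\sigma_{A_{1,2}}+\sigma_{A_{2,1}}+\sigma_{A_{2,2}}$ (split into three binary additions), an addition rule yields $\sigma_A\to\sigma_{A_1}+\sigma_{A_2}$, and a scalar rule yields $\sigma_A\to a$; the trace is obtained identically with a variable $\tau_A$ and the matrix case replaced by $\tau_A\to\tau_{A_{1,1}}+\tau_{A_{2,2}}$, since the trace of a $2\times2$ block matrix is the sum of the traces of its diagonal blocks. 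Each of these adds $O(n)$ variables and runs in time $O(n)$.

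For~(3) the key observation is that, writing $i-1$ and $j-1$ in binary with $h=\height(G)$ bits each, for a variable $A$ of height $k$ the only relevant entry of $\val(A)$ is the one indexed by the low-order $k$ bits of $i-1$ and of $j-1$, and the most significant among these $k$ bits selects into which of the four children of $A$ one must descend. So I would introduce, for each variable $A$, a scalar variable $e_A$ with: $e_A\to e_{A_{b+1,\,b'+1}}$ for a matrix rule, where $b$ and $b'$ are the bits of weight $2^{k-1}$ in $i-1$ and in $j-1$ respectively; $e_A\to e_{A_1}+e_{A_2}$ for an addition rule; and $e_A\to a$ for a scalar rule. The intermediate unit productions $e_A\to e_{A'}$ are not in the official rule format, but since the ``occurs-in-the-right-hand-side'' relation is acyclic they can be eliminated by substituting along their finite chains, leaving a genuine $+$-circuit whose start variable is the (substituted) copy of the start variable of $G$. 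Again this is $O(n)$ variables and $O(n)$ time.

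For~(4) I would treat the tensor product $\val(G)\otimes\val(H)$ via the two identities $(A_1+A_2)\otimes M=A_1\otimes M+A_2\otimes M$ and $\bigl(\begin{smallmatrix}A_{1,1}&A_{1,2}\\A_{2,1}&A_{2,2}\end{smallmatrix}\bigr)\otimes M=\bigl(\begin{smallmatrix}A_{1,1}\otimes M&A_{1,2}\otimes M\\A_{2,1}\otimes M&A_{2,2}\otimes M\end{smallmatrix}\bigr)$, the second of which holds because tensoring on the right with a fixed matrix commutes with splitting the left factor into its four quadrants. First, for every $a\in\Gamma\cup\{0\}$ and every variable $B$ of $H$ I would introduce a variable $[a,B]$ with $\val([a,B])=a\cdot\val(B)$, obtained by mirroring the rule for $B$ componentwise onto the $[a,\cdot]$-variables and putting $[a,B]\to S_{a,b}$ at a leaf $B\to b$; since $\Gamma$ is fixed this costs $O(m)$ variables. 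Then, for every variable $A$ of $G$ I would introduce $[A]$ with $\val([A])=\val(A)\otimes\val(H)$: mirror a matrix or addition rule componentwise onto the $[\cdot]$-variables as before, and at a leaf $A\to a$ set $[A]\to[a,S_H]$ with $S_H$ the start variable of $H$; the new start variable is $[S_G]$. The scalar product is the special case in which $\val(G)$ is the $1\times1$ matrix $(a)$. The Hadamard product $\val(G)\circ\val(H)$ is even more direct: for each pair $(A,B)$ of variables of equal height put $(A,B)\to\bigl(\begin{smallmatrix}(A_{1,1},B_{1,1})&(A_{1,2},B_{1,2})\\(A_{2,1},B_{2,1})&(A_{2,2},B_{2,2})\end{smallmatrix}\bigr)$ when both rules are matrix rules, $(A,B)\to(A_1,B)+(A_2,B)$ (and symmetrically) when one of the two rules is an addition, and $(A,B)\to S_{a,b}$ at matching leaves $A\to a$, $B\to b$; there are $O(n\cdot m)$ such pairs. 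Both products stay within the claimed $O(n\cdot m)$ time and size bounds.

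I expect the tensor product to be the only part needing real care, for two reasons. First, one has to verify the block recursion for $\otimes$ used above; it holds because $\otimes$ respects the splitting of its \emph{left} factor into its four $2^{k-1}$-dimensional quadrants, and this is where a careless recursion on $H$ rather than on $G$ would fail. Second, one must realize scalar multiplication $a\cdot\val(H)$ inside the $\MTDD_+$ format, which is possible only because $a$ and $b$ range over the fixed finite set $\Gamma\cup\{0\}$, so there are finitely many products $ab$, each with a fixed constant-size $+$-circuit $S_{a,b}$, exactly as in Theorem~\ref{thm:main-matrix-mult}. The other three parts are routine structural recursions; the only minor subtlety elsewhere is the (harmless, since acyclic) elimination of the unit productions in part~(3).
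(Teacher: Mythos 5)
Your constructions for (1)--(3) coincide with the paper's, and your tensor/Hadamard constructions are the paper's own recursion via bilinearity and the quadrant block structure, merely reorganized: the paper introduces pairs $(C,D)$ and descends into $H$ only once the $G$-variable is terminal, which is exactly your two-phase $[A]$-then-$[a,B]$ scheme with the same constant-size circuits $S_{a,b}$ at the leaves. The proposal is correct and essentially identical to the paper's proof; the unit-production issue you flag in (3) is glossed over in the paper as well and is harmless.
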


\begin{proof}
Point (1) (transposition): 
We  replace every rule in $G$ of the form 
  \begin{equation} \label{computing-trace}
  A \to \left(\begin{array}{ll} A_{1,1} &  A_{1,2} \\   A_{2,1} &
        A_{2,2} \end{array}\right)
  \end{equation}
  by the rule 
  \begin{equation*} 
  A \to \left(\begin{array}{ll} A_{1,1} &  A_{2,1} \\   A_{1,2} &
        A_{2,2} \end{array}\right).
  \end{equation*} 
  Point (2):   
  The sum of all entries of $\val(G)$ can be represented by the $+$-circuit
  that contains all rules $A \to A_{1,1}+A_{1,2}+A_{2,1}+A_{2,2}$ for $G$-rules
  of the form \eqref{computing-trace}.
 Similarly, we can compute a $+$-circuit for the trace of $\val(G)$ by
 replacing every rule \eqref{computing-trace} by $A \to A_{1,1} +
 A_{2,2}$.

\medskip
\noindent
Point (3): 
We transform the $\MTDD_+$ $G$ into a $+$-circuit $G'$ with the same set of variables
such that $\val(G') = (\val(G))_{i,j}$. Let $(i_h \cdots i_1)$ and $(j_h \cdots j_1)$ the binary
expansions if $i-1$ and $j-1$ (numbers in the range $[0,2^{\mathsf{height}(G)}-1]$), respectively, where $i_h$ and $j_h$ are the most significant bits.
Here, we add leading zeros on the left so that both numbers have exactly $h$ bits.

Now we can define the rules of the $+$-circuit $G'$.
Rules of the form $A \to a$ with $a \in \mathbb{Z}$ 
and $A \to A_1+A_2$ are simply copied to $G'$.
For a rule of the form 
$$
A \to \left(\begin{array}{ll} A_{0,0} &  A_{0,1}\\ A_{1,0} & A_{1,1}\end{array}\right).
$$ 
where $A$ has height $k$ we add to $G'$ the rule 
$A \to A_{i_k,j_k}$.

\medskip
\noindent
Point (4): For every variable $C$ of $G$ and every variable $D$ of $H$
 let  $(C,D)$ be a new variable of height $\mathsf{height}(C) + \mathsf{height}(D)$. We define the rule 
for $(C,D)$ in such a way that $\val(C,D) = \val(C) \otimes \val(D)$.
The rules reflect the bilinearity of the tensor product.

If $C \to a$ and $D \to b$  for $a,b \in \Gamma$, then $(C,D) \to S_{a,b}$, where
$S_{a,b}$ is the start variable for a (constant size) $+$-circuit that computes $a \cdot b$.

Now assume that $C \to a$ but the rule for $D$ is not terminal. If
$D \to D_1+D_2$, then $(C,D) \to (C,D_1) + (C,D_2)$ and 
if 
$$
D  \to \left(\begin{array}{ll} D_{1,1} &  D_{1,2} \\   D_{2,1} &
        D_{2,2} \end{array}\right)
$$
then
$$
(C,D)  \to \left(\begin{array}{ll} (C,D_{1,1}) &  (C,D_{1,2}) \\   (C,D_{2,1}) &
        (C,D_{2,2}) \end{array}\right).
$$
Finally, assume that the rule for $C$ is not terminal. If 
$C \to C_1+C_2$, then $(C,D) \to  (C_1,D) +  (C_2,D)$, and if 
$$C  \to \left(\begin{array}{ll} C_{1,1} &  C_{1,2} \\   C_{2,1} &
        C_{2,2} \end{array}\right),
$$ 
then $$(C,D) \to  \left(\begin{array}{ll} (C_{1,1},D) &  (C_{1,2},D) \\   (C_{2,1},D) &
        (C_{2,2},D) \end{array}\right).
$$       
 The proof for the construction of the element-wise product is similar as for the tensor-product.
\qed
\end{proof}

\subsection{Boolean circuits} \label{sec-bool-circuits}

Another well-studied succinct representation are boolean circuits \cite{GaWi83}.
A boolean circuit with $n$ inputs represents a binary string
of length $2^n$, namely the string of output values for 
the $2^n$ many input assignments (concatenated in lexicographic
order). In a similar way, we can use circuits to encode large matrices. We propose
two alternatives:    

A boolean circuit $C(\overline{x}, \overline{y}, \overline{z})$ with
$|\overline{x}|=m$ and $|\overline{y}|=|\overline{z}| = n$ encodes 
a $(2^n \times 2^n)$-matrix $M_{C,2}$ with integer entries bounded by
$2^{2^m}$ that is defined as follows: For all $\overline{a} \in \{0,1\}^m$
and $\overline{b}, \overline{c} \in \{0,1\}^n$, the $\overline{a}$-th bit
(in lexicographic order)
of the matrix entry at position $(\overline{b}, \overline{c})$ in $M_C$ is $1$ 
if and only if $C(\overline{a}, \overline{b}, \overline{c})=1$.

Note that in contrast to $\MTDD_+$, the size of an entry in $M_{C,2}$ can 
be doubly exponential in the size of the representation $C$ (this is the reason for the index $2$
in $M_{C,2}$). The following alternative  is closer to $\MTDD_+$:
A boolean circuit $C(\overline{x}, \overline{y})$ with
$|\overline{x}|=|\overline{y}| = n$ and $m$ output gates
encodes  a $(2^n \times 2^n)$-matrix $M_{C,1}$ with integer entries bounded by
$2^m$ that is defined as follows: For all $\overline{a}, \overline{b} \in \{0,1\}^n$,
$C(\overline{a}, \overline{b})$ is the binary encoding of the entry at position $(\overline{a}, \overline{b})$ in $M_C$.

Circuit representations for matrices are at least as 
succinct as $\MTDD_+$.  More precisely,
from a given $\MTDD_+$ $G$ one can compute in logspace
a Boolean circuit $C$ such that $M_{C,1} = \val(G)$. This is a direct corollary of 
Proposition~\ref{prop:simple-constructions}(3) (stating that a given entry of 
an $\MTDD_+$-represented matrix can be computed in polynomial time) and
the fact that polynomial time computations can be simulated by boolean circuits.
Recently, it was shown that checking whether for a given circuit $C$ 
the determinant of the matrix $M_{C,1}$ 
vanishes is $\PSPACE$-complete \cite{GreKoiPort12}. An algebraic
version of this result for the algebraic complexity class $\mathsf{VPSPACE}$ is shown in \cite{Mal11}. 
Theorem~\ref{theorem-det} from Section~\ref{sec-hard} will strengthen the result from \cite{GreKoiPort12} to 
MTDD-represented matrices.

\section{Testing equality} \label{sec-eq}

In this section, we consider the problem of testing equality of $\MTDD_+$-represented matrices.
For this, we do not need the full semiring structure, but we only need the finitely generated additive  
monoid $(S,+)$. We will show that equality can be checked in polynomial time if $(S,+)$ is cancellative
and {\sf coNP}-complete otherwise.

First we consider the case of a  finitely generated abelian group.
The  proof of the following lemma involves only basic linear algebra.

\begin{lemma}\label{lemma:equations-torsionfree-redundant}
Let $a_{i,1}x_1 + \cdots +a_{i,n}x_n = 0$ for $1 \leq i \leq m \leq n+1$ be equations over 
a torsion-free abelian group $A$, where $a_{i,1},\ldots, a_{i,n} \in
\ZZ$, and the variables $x_1,\ldots,x_n$ range over $A$.
One can determine in time polynomial in $n$ and 
$\max\{ \log |a_{i,j}| \mid  1 \leq i\leq m, 1 \leq j \leq n\}$
an equivalent set of at most $n$ linear equations.   
\end{lemma}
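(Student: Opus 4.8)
The plan is to replace the given system by a maximal $\QQ$-linearly independent subset of its own equations. Write $M = (a_{i,j}) \in \ZZ^{m\times n}$ with rows $r_1,\dots,r_m \in \ZZ^n$, and read the $i$-th equation as $r_i \cdot x = 0$ for $x = (x_1,\dots,x_n) \in A^n$. The structural fact I would use is that a torsion-free abelian group $A$ embeds into its divisible hull $D := A \otimes_{\ZZ} \QQ$, a $\QQ$-vector space, via an injective homomorphism; hence, since $r_i \cdot x$ is a sum with integer coefficients, $r_i\cdot x = 0$ holds in $A$ if and only if it holds in $D$. Therefore the solution set of $Mx = 0$ in $A^n$ equals $A^n \cap K$, where $K = \{\, x \in D^n : r_i\cdot x = 0 \text{ for all } i \,\}$, and $K$ depends only on the $\QQ$-subspace $U := \mathrm{span}_\QQ\{r_1,\dots,r_m\} \subseteq \QQ^n$, by linearity of the constraints.

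First I would compute, by standard polynomial-time integer linear algebra (e.g., Bareiss-style fraction-free Gaussian elimination, using Hadamard's inequality to bound all intermediate integers, which are minors of $M$, by a polynomial number of bits in $n$ and $\max_{i,j}\log|a_{i,j}|$), an index set $I \subseteq \{1,\dots,m\}$ such that $\{r_i : i \in I\}$ is a basis of $U$. Then $|I| = \mathrm{rank}(M) \le \min(m,n) \le n$, and since $m \le n+1$ the whole computation runs in time polynomial in $n$ and $\max_{i,j}\log|a_{i,j}|$.

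Then I would claim that the subsystem $\{\, r_i \cdot x = 0 : i \in I \,\}$, which consists of at most $n$ equations, is equivalent over $A$ to the full system. One inclusion of solution sets is immediate because $I \subseteq \{1,\dots,m\}$. For the converse, fix $k \notin I$ and write $r_k = \sum_{i \in I} \lambda_i r_i$ in $\QQ^n$ with $\lambda_i \in \QQ$ (possible since $\{r_i : i\in I\}$ spans $U$). If $x \in A^n$ satisfies $r_i\cdot x = 0$ for all $i \in I$, then in the $\QQ$-vector space $D$ we get $r_k\cdot x = \sum_{i\in I} \lambda_i\,(r_i\cdot x) = 0$; as $r_k\cdot x$ already lies in $A$ and $A$ embeds in $D$, this forces $r_k\cdot x = 0$ in $A$ as well, so $x$ solves the full system.

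The only delicate points are the complexity bookkeeping in the elimination step and the transfer from a $\QQ$-linear consequence to a consequence over $A$; the latter is precisely where torsion-freeness is indispensable, since for a group with torsion the argument collapses --- for instance $2x = 0$ and $x = 0$ span the same $\QQ$-row space but have different solution sets in $\ZZ/2\ZZ$.
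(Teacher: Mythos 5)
Your proof is correct and follows essentially the same route as the paper: compute a maximal $\QQ$-linearly independent subset of the coefficient rows by polynomial-time exact rational linear algebra, and use torsion-freeness to show the remaining equations are consequences. The only cosmetic difference is that you pass to the divisible hull $A\otimes_{\ZZ}\QQ$ to apply the rational dependency, whereas the paper clears denominators to get an integer relation $\lambda_i a_i=\sum_{j<i}\lambda_j a_j$ and then cancels the nonzero integer $\lambda_i$ directly via torsion-freeness; these are interchangeable formalizations of the same step.
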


\begin{proof}
Let $a_i = (a_{i,1}, \ldots, a_{i,n}) \in \mathbb{Z}^n$
be the vector of coefficients of the $i$-th equation.
For $0 \leq i \leq n$ let $U_i \subseteq \mathbb{Q}^n$
be the subspace of the vector space generated by $a_1,\ldots,a_i$
($U_0$ is the 0-space). 
For $i=1,\ldots,n+1$, we now test whether $a_i \in U_{i-1}$.
This can be checked by testing whether a system of linear equations
has a solution in $\mathbb{Q}^n$. This problem can be solved in 
time polynomial in $n$ and $\log(\max\{ |a_{i,j}| \mid  1 \leq i\leq
m, 1 \leq j \leq n\})$, e.g. by Gaussian elimination.
If $a_i \in U_{i-1}$ then we obtain an equation
$$
\lambda_i a_i =  \lambda_1 a_1 + \cdots + \lambda_{i-1} a_{i-1} 
$$
with $\lambda_1, \ldots, \lambda_i \in \mathbb{Z}$ and $\lambda_i \neq 0$.
Hence, if group elements $x_1, \ldots, x_n \in A$ satisfy
$a_{j,1}x_1 + \cdots +a_{j,n}x_n = 0$ for all $1 \leq j \leq i-1$, then
we get 
$\lambda_i (a_{i,1}x_1 + \cdots +a_{i,n}x_n) = 0$
in $A$. Since $A$ is assumed to be torsion-free, we get 
$a_{i,1}x_1 + \cdots +a_{i,n}x_n = 0$. Hence, 
the $i$-th equation is redundant.
Moreover, there must be an $1 \leq i \leq n+1$ with $a_i \in U_{i-1}$:
If $a_i \not\in U_{i-1}$ for $1 \leq i \leq n$, then $a_1,\ldots,a_n$
are linearly independent and therefore generate the full
$\mathbb{Q}^n$. But then $a_{n+1} \in U_n$.
\qed
\end{proof}
Recall that the {\em exponent} of an abelian group $A$ is the smallest integer $k$ (if it exists) such that
$kg = 0$ for all $g \in A$. The following result is shown in \cite{storjohann-mulders:98}:

\begin{lemma}\label{lemma:equations-torsion-redundant}
Let $k \geq 2$ and let $A$ be an abelian group of exponent $k$.
Let $a_{i,1}x_1 + \cdots +a_{i,n}x_n = 0$ for $1 \leq i \leq m \leq n+1$ be equations, where $a_{i,1},\ldots, a_{i,n} \in
\ZZ$, and the variables $x_1,\ldots,x_n$ range over $A$.
Then one can determine in time polynomial in $n$, $\log(k)$, and 
$\max\{ \log |a_{i,j}| \mid  1 \leq i\leq m, 1 \leq j \leq n\}$
an equivalent set of at most $n$ linear equations.  
\end{lemma}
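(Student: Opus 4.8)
The plan is to pass from the group $A$ to the ring $R := \ZZ/k\ZZ$ and to replace the dimension‑counting argument of Lemma~\ref{lemma:equations-torsionfree-redundant} by module theory over $R$. The starting observation is that, since $kg = 0$ for every $g \in A$, an integer coefficient acts on $A$ only through its residue modulo $k$: the equation $a_{i,1}x_1 + \cdots + a_{i,n}x_n = 0$ over $A$ is unchanged if each $a_{i,j}$ is replaced by $a_{i,j} \bmod k \in \{0,\dots,k-1\}$. So the first step is to perform this reduction — which costs time polynomial in $n$, $\log k$ and the input bit‑length, and never produces a coefficient of more than $\lceil \log k\rceil$ bits — and to regard the coefficient rows $\bar a_1,\dots,\bar a_m$ as elements of the free module $R^n$.

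Next I would isolate the equivalence criterion. Call a system with coefficient rows $\bar b_1, \dots, \bar b_\ell \in R^n$ \emph{equivalent} to our system if it has the same solution set in $A^n$. I claim that any two systems whose rows generate the same $R$‑submodule of $R^n$ are equivalent. Indeed, suppose $\bar b = \sum_{i} c_i \bar a_i$ holds in $R^n$ and lift the $c_i$ to integers; then componentwise $b_j = \sum_i c_i a_{i,j} + k t_j$ for suitable $t_j \in \ZZ$, so for any $x_1,\dots,x_n \in A$ with $\sum_j a_{i,j} x_j = 0$ for all $i$ we get
\[
\sum_j b_j x_j \;=\; \sum_i c_i\Bigl(\sum_j a_{i,j}x_j\Bigr) + \sum_j t_j\,(k x_j) \;=\; 0 ,
\]
using $k x_j = 0$. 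Hence every equation of one system is a consequence of the other whenever the row modules coincide, and by symmetry this gives equivalence. (Only this implication is actually needed to justify replacing the system.)

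It therefore remains to compute, from the given rows, a generating set of the submodule $M := \langle \bar a_1,\dots,\bar a_m\rangle \subseteq R^n$ of size at most $n$. This is exactly what the Howell normal form provides: by \cite{storjohann-mulders:98} one can compute, in time polynomial in $m$, $n$ and $\log k$, a matrix in Howell form with the same row module, all of whose entries lie in $\{0,\dots,k-1\}$ and which has at most $n$ nonzero rows. Taking those rows as the new equations finishes the argument, since $m \leq n+1$ makes the whole running time polynomial in $n$, $\log k$ and the original coefficient length.

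The place where the argument genuinely uses \cite{storjohann-mulders:98} — and the main obstacle to a fully self‑contained proof — is this last step: over a field the bound ``$\le n$ nonzero rows'' is the triviality that a row‑echelon form has no more nonzero rows than columns, but $R = \ZZ/k\ZZ$ is only a principal ideal ring, and a naive echelon form over $R$ need not capture the whole row module (one must add the extra ``Howell'' rows forced by zero divisors). A self‑contained treatment would reimplement Gaussian elimination over $R$ with these corrections, using that every ideal of $R$ is principal, and would observe that all intermediate entries can be kept in $\{0,\dots,k-1\}$ so that the size bound is automatic; but it is cleaner simply to invoke the algorithm of Storjohann and Mulders.
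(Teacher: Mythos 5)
Your proposal is correct and follows essentially the same route as the paper: reduce the coefficients modulo $k$, view the rows as elements of $\ZZ_k^n$, and invoke the Howell normal form algorithm of Storjohann and Mulders to obtain at most $n$ rows with the same row module, hence an equivalent system. You additionally spell out why equal row modules give equal solution sets in $A^n$, a point the paper leaves implicit, but the argument is the same.
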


\begin{proof} 
We can consider the coefficients $a_{i,j}$ as elements from $\ZZ_k$.
By  \cite{storjohann-mulders:98} we can compute the Howell normal form 
of the matrix $(a_{i,j})_{1 \leq i \leq n+1, 1 \leq j \leq n} \in \ZZ_k^{(n+1) \times n}$ in polynomial time.
The Howell normal form is an $(n \times n)$-matrix with the same row span (a subset of the module 
$\ZZ_k^n$) as the original matrix, and hence defines an equivalent set of 
linear equations.
\qed
\end{proof}
 
\begin{theorem}\label{thm:matrix-equality} Let $G$ be an $\MTDD_+$ over a
finitely generated  abelian group $S$. 
Given two different variables $A_1,A_2$ of the same height, 
it is possible to check $\val(A_1) = \val(A_2)$ in time polynomial in $|G|$.
\end{theorem}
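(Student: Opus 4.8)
The plan is to reduce equality checking of $\MTDD_+$-represented matrices to solving a polynomially-sized system of linear equations over the finitely generated abelian group $S$, exploiting the structure theorem for such groups: $S \cong \ZZ^r \times \ZZ_{k_1} \times \cdots \times \ZZ_{k_s}$, so it suffices to check equality componentwise in the torsion-free part and in each torsion part, and for the torsion parts we may work modulo the exponent $k = \mathrm{lcm}(k_1,\ldots,k_s)$. Thus I focus on the case where $S$ is either torsion-free or of finite exponent $k$; the general case follows by running both procedures in parallel on the respective projections.

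First I would set up the right collection of "difference variables". For each pair of variables $(A,B)$ of $G$ of the same height $h$, introduce a formal symbol $[A,B]$ intended to denote $\val(A) - \val(B)$. The goal is to decide, for the given pair $(A_1, A_2)$, whether $\val(A_1) - \val(A_2)$ is the zero matrix. The key observation is that whether a $\MTDD_+$-expression evaluates to the zero matrix depends only on linear-algebraic relations among the "atoms" — where I take as atoms the height-$0$ scalar variables $B \to a$ of $G$. Every entry of every $\val(A)$ is a fixed $\ZZ$-linear combination $\sum_j c_j(A; i,j)\, a_j$ of these atoms, because the only operations in an $\MTDD_+$ are forming $2\times 2$ block matrices and componentwise addition — both $\ZZ$-linear. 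So $\val(A_1) = \val(A_2)$ in $S$ iff for every matrix position the corresponding $\ZZ$-linear combination of atom values vanishes in $S$.

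The difficulty, and the place where the work lies, is that there are exponentially many matrix positions, so I cannot simply write down one equation per entry. Here is where I would use a recursive/bottom-up propagation: process variables in topological order of increasing height, maintaining for each pair $[A,B]$ a polynomial-size set $E_{A,B}$ of linear equations over $S$ (in the atoms, or more economically in auxiliary variables naming the entries of lower-height blocks) that is equivalent to "$\val(A) = \val(B)$". For a block rule the equations for $[A,B]$ are (after possibly pushing addition rules down, exactly as in the matrix-multiplication construction, Theorem~\ref{thm:main-matrix-mult}) the union of the equations for the four corresponding sub-pairs; for addition rules one substitutes accordingly; at height $0$ one gets a single equation $a - b = 0$. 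Naively the union blows up, but at each height the accumulated system has at most, say, $n+1$ "independent" equations' worth of content in $n$ unknowns — so after each merge step I invoke Lemma~\ref{lemma:equations-torsionfree-redundant} (torsion-free case) or Lemma~\ref{lemma:equations-torsion-redundant} (exponent-$k$ case) to reduce the system back to at most $n$ equations in polynomial time. This keeps every intermediate system polynomially bounded, the whole computation runs in time polynomial in $|G|$ (note $\log k$ is polynomial in $|G|$ since $k$ is a fixed parameter of $S$, or bounded by the group data), and at the start variables $(A_1,A_2)$ the final reduced system is satisfiable-vacuously — i.e., it is the empty system — exactly when $\val(A_1) = \val(A_2)$.

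The main obstacle I anticipate is making the recursion genuinely polynomial: one must be careful that "the equations for $[A,B]$" are expressed over a shared, polynomially-bounded set of unknowns (the natural choice is to have, for each height-$i$ variable $C$, unknowns $x^C_{p}$ for... actually only a bounded number of "profile" unknowns per variable, since $\val(C)$ as a vector of entries need only be tracked up to the linear relations that matter), and that merging four children's systems plus reducing via the redundancy lemmas does not secretly reintroduce exponentially many distinct unknowns. The cleanest way is probably to keep unknowns indexed by pairs of $G$-variables of the same height (polynomially many), treat each $[A,B]$ symbolically as a vector-valued unknown subject to the block/addition recurrences read as linear identities, and observe that "this linear system over $S$ forces $[A_1,A_2] = 0$" is decidable by the stated lemmas. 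Verifying that the reduction step preserves equivalence over $S$ (not just over $\QQ$ or $\ZZ_k$) is exactly what Lemmas~\ref{lemma:equations-torsionfree-redundant} and~\ref{lemma:equations-torsion-redundant} were proved for, so that part is already in hand.
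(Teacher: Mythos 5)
You have the right ingredients---reduce via the structure theorem to $\ZZ$ and the finite-exponent case, turn equality into a system of linear equations over the group, and use Lemmas~\ref{lemma:equations-torsionfree-redundant} and~\ref{lemma:equations-torsion-redundant} to keep that system polynomially small---and this is exactly the machinery the paper uses. But the recursion you build on top of it has a genuine gap. Indexing the subproblems by \emph{pairs} $[A,B]$ does not survive the addition rules: if $A \to A_1 + A_2$, then $\val(A) = \val(B)$ unfolds to $\val(A_1) + \val(A_2) - \val(B) = 0$, a relation among three variables that is not of the form $\val(C) = \val(D)$ for any pair, and descending through block rules then produces per-block equations such as $\val((A_1)_{i,j}) + \val((A_2)_{i,j}) - \val(B_{i,j}) = 0$; nested additions make the linear combinations longer still. ``One substitutes accordingly'' hides the fact that the objects you must propagate are arbitrary integer linear combinations $a_1 B_1 + \cdots + a_k B_k = 0$ of same-height variables, not pairs. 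Your fallback in the last paragraph---treat each $[A,B]$ as a symbolic unknown subject to the recurrences and ask whether the system ``forces'' $[A_1,A_2] = 0$---asks the wrong question: that is a universally quantified statement over all values of the unknowns, whereas you need to know whether the \emph{actual} values determined by the height-$0$ rules satisfy the equality. A system of recurrences essentially never forces a difference to vanish identically; equality typically holds only because of the concrete scalars at the leaves.

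The paper's proof resolves both issues by running the recursion top-down with a single global system. One starts from the single equation $A_1 - A_2 = 0$ and maintains the invariant that the current system $\sum_j a_{i,j} B_j = 0$ ($1 \leq i \leq m$), with all $B_j$ of the current height $h$ and viewed as ranging over the torsion-free group $\ZZ^{2^h \times 2^h}$ (resp.\ a group of exponent $k$), is satisfied by the actual values iff $\val(A_1) = \val(A_2)$. Addition rules are handled by substituting $A_1 + A_2$ for $A$ inside the linear combinations (this is precisely where the general form is needed), block rules split each equation into four equations one height down, and after each round Lemma~\ref{lemma:equations-torsionfree-redundant} (or Lemma~\ref{lemma:equations-torsion-redundant}) prunes the system back to at most $|G|$ equations; the coefficients stay bounded by $2^{|G|}$ since they grow only by iterated addition. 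At height $0$ one substitutes the generators and checks validity. Your bottom-up scheme could be repaired along the same lines, but only by replacing ``pairs'' with ``linear combinations of same-height variables,'' at which point it becomes the paper's algorithm read in the opposite direction.
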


\begin{proof} Since every finitely generated group is a finite direct product of copies
of $\ZZ$ and $\ZZ_k$ ($k \geq 2$), it suffices to prove the theorem only for these
groups. 

Consider the case $S = \ZZ$. 
The algorithm stores a system of $m$ equations ($m$ will be bounded later) of the form 
$a_{i,1}B_1 + \cdots + a_{i,k}B_k = 0$, where all $B_1,\ldots,B_k$ are 
pairwise different variables of the same height $h$. We treat the 
variables $B_1,\ldots,B_k$ as variables that range over the torsion-free abelian group
$\mathbb{Z}^{2^h \times 2^h}$.
We start with the single equation $A_1 - A_2 = 0$.
We use the rules of $G$ to transform the system of equations into 
another system of equations whose variables have strictly smaller height.
Assume the current height is $h > 1$. We
iterate the following steps until only variables of height $h-1$ occur in
the equations:

\smallskip
\noindent
{\it Step 1.}
Standardize equations: Transform all equations into the form 
   $a_1B_1 + \cdots + a_mB_m = 0$, where the $B_i$ are different variables
   and the $a_i$ are integers. 

\smallskip
\noindent
{\it Step 2.}
Reduce the number of equations,
      using Lemma~\ref{lemma:equations-torsionfree-redundant} applied to the torsion-free abelian
      group $\mathbb{Z}^{2^h \times 2^h}$.

\smallskip
\noindent
{\it Step 3.} 
If a variable $A$ of height $h$ occurs in the
    equations, and the rule for $A$ has  
    the form $A \to A_1 + A_2$, then replace every occurrence of $A$ in the equations 
     by $A_1+A_2$.

\smallskip
\noindent
{\it Step 4.}
If none of steps 1--3 applies to the equations, then
     only  rules of the form
    \begin{equation}\label{decompose-equations}  
    A \to \left(\begin{array}{ll} A_{1,1} &  A_{1,2} \\   A_{2,1} &
        A_{2,2} \end{array}\right)
     \end{equation} 
     are applicable 
     to a variable $A$ (of height $h$) occurring in the equations.
     Applying all possible  rules of this form for the current
     height results in a set of equations where all variables 
     are $(2\times 2)$-matrices over variables of height $h-1$
     (like the right-hand side of \eqref{decompose-equations}). Hence,
     every equation can be decomposed into 4 equations, where all
     variables are variables of height $h-1$.

 If the height of all variables is finally 0, then only rules of the form $A \to a$ are applicable. 
 In this case, replace all variables by the corresponding  integers,
 and check whether all resulting equations are valid or not.  If all
 equations hold, then the input equation holds, i.e., $\val(A_1)=\val(A_2)$. Otherwise, 
 if  at least one equation is not valid, then $\val(A_1) \neq \val(A_2)$.
  
The number of variables in the equations is bounded by the number of variables of $G$.
An upper bound on the  absolute value of the coefficients in the equations is $2^{|G|}$, 
since only iterated addition can be performed to increase the coefficients.
Lemma~\ref{lemma:equations-torsionfree-redundant} shows that the number of equations after step~2 above is at most $|G|$, 
(the bound for the number of different variables). 

For the case $S = \ZZ_k$ the same procedure works, we only have to use Lemma~\ref{lemma:equations-torsion-redundant} instead of 
Lemma~\ref{lemma:equations-torsionfree-redundant}.
 \qed
\end{proof}

\begin{corollary}\label{coro:matrix-equality} Let $M$ be a finitely generated
cancellative commutative monoid.
Given an $\MTDD_+$ $G$ over $M$ and two variables $A_1$ and $A_2$ of $G$, one can check  $\val(A_1) = \val(A_2)$ in time polynomial in $|G|$.
\end{corollary}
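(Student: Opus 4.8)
The plan is to reduce the case of an arbitrary finitely generated cancellative commutative monoid $M$ to the case of a finitely generated abelian group, so that Theorem~\ref{thm:matrix-equality} applies directly. The key algebraic fact is that a cancellative commutative monoid embeds into its \emph{Grothendieck group} (group of differences) $G(M)$, which consists of equivalence classes of pairs $(a,b)$ representing $a-b$, with $(a,b)\sim(a',b')$ iff $a+b'=a'+b$ (cancellativity is exactly what makes this an equivalence relation and makes the natural map $M\to G(M)$ injective). Moreover, if $M$ is generated by a finite set $\Gamma$, then $G(M)$ is generated by $\Gamma$ as well (every element $a-b$ with $a,b\in M$ is a $\ZZ$-linear combination of elements of $\Gamma$), so $G(M)$ is a finitely generated abelian group.

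First I would observe that for two matrices $P,Q$ over $M$ we have $P=Q$ (as matrices over $M$) if and only if $P=Q$ as matrices over $G(M)$, since the embedding $M\hookrightarrow G(M)$ is applied entrywise and is injective. Hence $\val_G(A_1)=\val_G(A_2)$ over $M$ iff the same holds over $G(M)$.

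Next I would note that the given $\MTDD_+$ $G$ over $M$ is, verbatim, an $\MTDD_+$ over $G(M)$: all its rules are of the three allowed forms, terminal rules produce elements of $\Gamma\cup\{0\}\subseteq G(M)$, and the $\val$ operation only uses the additive structure, which is preserved by the embedding. There is one minor bookkeeping point: the definition of $\MTDD_+$ fixes a finite generating set of the additive monoid, and we must supply one for $G(M)$; we take $\Gamma\cup(-\Gamma)$ (or just record that $G(M)$ is generated as a group by $\Gamma$, which is all that is needed — one can first convert $G$ into an equivalent $\MTDD_+$ over $G(M)$ with generating set $\Gamma\cup(-\Gamma)$ in linear time, simply by leaving the rules unchanged). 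The size of the resulting $\MTDD_+$ over $G(M)$ is $O(|G|)$.

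Finally I would invoke Theorem~\ref{thm:matrix-equality}: it shows that equality of two variables of the same height in an $\MTDD_+$ over a finitely generated abelian group can be tested in polynomial time. Applying it to our $\MTDD_+$ over $G(M)$ and the variables $A_1,A_2$ yields a polynomial-time test for $\val(A_1)=\val(A_2)$, as required. (If $A_1$ and $A_2$ have different heights they produce matrices of different dimensions and are trivially unequal, so we may assume equal heights.) I do not expect a genuine obstacle here: the only thing to be careful about is that the reduction to $G(M)$ is effective and size-preserving, and that the Grothendieck group of a finitely generated cancellative commutative monoid is indeed a finitely generated abelian group — both are standard, so the corollary is essentially immediate from Theorem~\ref{thm:matrix-equality}.
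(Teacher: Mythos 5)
Your proposal is correct and matches the paper's own argument: embed $M$ into its Grothendieck group, observe that this group is finitely generated abelian and that the entrywise embedding preserves and reflects equality, and then apply Theorem~\ref{thm:matrix-equality}. The extra bookkeeping you supply about the generating set $\Gamma\cup(-\Gamma)$ is a reasonable elaboration of a step the paper leaves implicit.
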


\begin{proof}
A cancellative commutative monoid $M$ embeds into its Grothendieck group $A$, which is the 
quotient of $M \times M$ by the congruence defined by $(a,b) \equiv (c,d)$ if and only if 
$a+d = c+b$ in $M$. This is an abelian group, which is moreover finitely generated if $M$ is finitely
generated. Hence, the result follows from Theorem~\ref{coro:matrix-equality}.
\qed
\end{proof}
Let us now consider non-cancellative commutative monoids:

 \begin{theorem} \label{thm-identity-coNP}
 Let $M$ be a non-cancellative finitely generated commutative monoid. It is {\sf coNP}-complete
 to check $\val(A_1) = \val(A_2)$ for a given $\MTDD_+$ $G$ over $M$ and two variables $A_1$ and $A_2$ of $G$.
 \end{theorem}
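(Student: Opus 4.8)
The plan is to establish membership in {\sf coNP} and {\sf coNP}-hardness separately. For membership, suppose $M$ is a finitely generated commutative monoid, say generated by $\Gamma = \{g_1,\dots,g_r\}$. An $\MTDD_+$ $G$ over $M$ produces matrices whose entries are elements of $M$; by Proposition~\ref{prop:simple-constructions}(3), for any position $(i,j)$ with $1 \leq i,j \leq 2^{\height(G)}$ one can compute in polynomial time a $+$-circuit for $\val(A_k)_{i,j}$, and (as noted after the definition of $+$-circuits) one can then extract in quadratic time a tuple of natural numbers $(n_a)_{a \in \Gamma}$ with $\val(A_k)_{i,j} = \sum_{a \in \Gamma} n_a\, a$. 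So $\val(A_1) \neq \val(A_2)$ iff there is a position $(i,j)$ — encoded by $O(\height(G))$ bits, i.e.\ polynomially many — such that the two resulting $M$-elements differ. The only subtlety is that the coefficients $n_a$ may be exponentially large (up to $2^{|G|}$), so we cannot afterwards simply "evaluate" the sum naively in $M$. But a finitely generated commutative monoid is, by Redei's theorem, finitely presented, so whether $\sum n_a a = \sum n'_a a$ holds in $M$ is a question about two vectors of exponentially-large-but-polynomially-many-bit numbers lying in the same coset of a fixed, computable congruence on $\NN^r$. This is decidable; the point for {\sf coNP} is that we need it decidable in time polynomial in $|G|$ given the binary encodings of the $n_a$. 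I would argue this by reducing to a fixed instance of the word problem / membership problem for a fixed finitely generated commutative monoid, which (being expressible via a fixed ideal in $\NN^r$, handled through the fixed Hilbert basis of the defining congruence) can be decided in time polynomial in the bit-lengths of the input coefficients. Thus the nondeterministic verifier guesses $(i,j)$, computes the two $M$-elements as coefficient vectors in binary, and checks inequality; this puts the complement in $\NP$, so the problem is in {\sf coNP}.

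For hardness we exploit non-cancellativity. Since $M$ is non-cancellative, there exist $p, q, s \in M$ with $p + s = q + s$ but $p \neq q$. Fix such elements. The plan is to reduce from the complement of a suitable $\NP$-complete problem; the natural choice, given the shape of the construction in Theorem~\ref{prop-MTDD-succinct}, is {\sc Subset-Sum} or — even cleaner — {\sc 3SAT} / {\sc TAUTOLOGY}, using the fact that an $\MTDD_+$ can, by exactly the $C_{i+1} \to (C_i, C_i + B_i)$ trick of Theorem~\ref{prop-MTDD-succinct}, produce a vector of length $2^m$ that enumerates all of $\{0,1\}^m$ in its entries (suitably interpreted over $M$). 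Given a Boolean formula $\varphi(x_1,\dots,x_m)$, one builds in polynomial time two small $\MTDD_+$s producing length-$2^m$ vectors $u$ and $v$ such that at the position indexed by an assignment $\bar a$, the entry of $u$ equals $p + s$ and the entry of $v$ equals $q + s$ if $\varphi(\bar a)=0$ (a falsifying assignment), while both entries equal the common value $p+s=q+s$ if $\varphi(\bar a)=1$. Then $\val(u) = \val(v)$ iff $\varphi$ is a tautology, which is {\sf coNP}-complete.

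The engine for encoding "$\varphi(\bar a)$" into the entries is the standard one: a Boolean circuit (or formula) of size $t$ can be simulated by an $\MTDD_+$-style construction so that the $2^m$-vector of its outputs over $\{0,1\} \subseteq \NN \subseteq M$ — or rather, with $1$ replaced by $s$ and $0$ replaced by a neutral accumulator — is produced by a polynomial-size $\MTDD_+$; this is essentially the content of the remark in Section~\ref{sec-bool-circuits} that $\MTDD_+$s capture polynomial-time entry computations, combined with the enumeration gadget. Concretely I would produce a vector $w$ over $M$ whose $\bar a$-th entry is $s$ if $\varphi(\bar a)=0$ and $0$ if $\varphi(\bar a)=1$, then set $\val(u) = p \cdot \mathbf{1} + w$ and $\val(v) = q \cdot \mathbf{1} + w$ componentwise (constant vectors $p\cdot\mathbf 1$, $q\cdot\mathbf 1$ are trivial $\MTDD_+$s), and observe $u,v$ agree at $\bar a$ iff $p + [\varphi(\bar a)=0]s = q + [\varphi(\bar a)=0]s$, which holds iff $\varphi(\bar a)=1$ (using $p\neq q$ and $p+s=q+s$). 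I expect the main obstacle to be the {\sf coNP} membership part — specifically, making precise that the word problem of the \emph{fixed} monoid $M$ on inputs given as binary-encoded coefficient vectors is solvable in polynomial time. This needs a uniform effective description of the congruence on $\NN^r$ defining $M$ (via a finite Hilbert basis of the module of relations, using Redei's finite-presentation theorem) and an argument that coset membership for that fixed finitely generated $\NN^r$-congruence can be tested in time polynomial in the bit-length of the coefficients; everything else (the enumeration gadget, the circuit simulation, the coset arithmetic for hardness) is routine given the constructions already in the paper.
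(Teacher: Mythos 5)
Your overall architecture (guess a position for the complement, evaluate both entries as coefficient vectors over the generators, check equality in $M$; for hardness, exploit $p\neq q$ with $p+s=q+s$ and a vector enumerating all assignments) is the same as the paper's, but both halves have a concrete gap in the step you leave to be filled in.

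For membership, your proposed resolution of the word problem for the fixed monoid $M$ on binary-encoded coefficient vectors does not work as stated. You want to decide whether $\sum n_a a = \sum n'_a a$ via ``a fixed Hilbert basis of the module of relations''; but for a \emph{non-cancellative} monoid the defining congruence on $\NN^r$ is not determined by a sublattice or submodule of $\ZZ^r$ (that description only captures cancellative quotients), and viewed as a submonoid of $\NN^{2r}$ the congruence need not even be finitely generated as a monoid (already for $\NN/(1\sim 2)$ it is not). The correct tool, used in the paper, is Ta{\u\i}clin's theorem that the equality relation $\{(x_1,\dots,x_r,y_1,\dots,y_r): \sum x_i a_i = \sum y_i a_i\}$ is a \emph{semilinear} subset of $\NN^{2r}$, combined with Lenstra's polynomial-time integer programming in the fixed dimension $2r$ to test membership of a binary-encoded vector in that fixed semilinear set. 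Without some such ingredient the polynomial bound on the verifier is not justified, and this is exactly the point you flag as the main obstacle.

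For hardness, the step ``produce a polynomial-size $\MTDD_+$ for the vector $w$ whose $\bar a$-th entry is $s$ iff $\varphi(\bar a)=0$'' is false for a general formula or circuit $\varphi$: the remark in Section~\ref{sec-bool-circuits} goes only from $\MTDD_+$ to circuits, not back, and the exact truth-table vector of an arbitrary Boolean function has no small MTDD/OBDD in general. The paper avoids this by reducing from 3SAT and building one MTDD $G_i$ \emph{per clause} $C_i$ (trivially small, since a clause depends on three variables), whose $\bar a$-th entry is $z$ if $\bar a$ falsifies $C_i$ and $0$ otherwise; summing them gives entry $j\cdot z$ with $j$ the number of falsified clauses, and one then uses that $x+z=y+z$ implies $x+jz=y+jz$ for every $j\ge 1$, so equality of $x\mathbf{1}+\sum_i\val(G_i)$ and $y\mathbf{1}+\sum_i\val(G_i)$ holds iff every assignment falsifies some clause. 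Your Iverson-bracket formulation needs the exact indicator $[\varphi(\bar a)=0]$, which you cannot produce; replacing it with the per-clause sum and the ``$j\ge 1$ suffices'' observation repairs the argument and is essentially what the paper does.
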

 
 \begin{proof}
 We start with the upper bound. Let $\{a_1, \ldots, a_k\}$ be a finite generating set of $M$.
 Let $G$ be an $\MTDD_+$ over $M$ and let $A_1$ and $A_2$ two variables of $G$.
 Assume that $A_1$ and $A_2$ have the same height $h$. It suffices to check in polynomial time for two given indices $1 \leq i,j \leq 2^h$
 whether $\val(A_1)_{i,j} \not= \val(A_2)_{i,j}$. 
 From $1 \leq i,j \leq 2^h$ we can compute $+$-circuits for the matrix entries $\val(A_1)_{i,j}$ and $\val(A_2)_{i,j}$. From these
 circuits we can compute numbers $n_1, \ldots, n_k, m_1, \ldots, m_k \in \mathbb{N}$ in binary representation such that
 $\val(A_1)_{i,j} = n_1 a_1 + \cdots + n_k a_k$ and 
 $\val(A_2)_{i,j} = m_1 a_1 + \cdots + m_k a_k$.
 Now we can use the following result from \cite{Tai68}: There is a semilinear subset $S \subseteq \mathbb{N}^{2k}$ 
 (depending only on our fixed monoid $M$)
 such that for all $x_1, \ldots, x_k, y_1, \ldots, y_k \in \mathbb{N}$ we have: 
 $x_1 a_1 + \cdots + x_k a_k = y_1 a_1 + \cdots + y_k a_k$ if and only if $(x_1,\ldots, x_k, y_1, \ldots, y_k) \in S$.
 Hence, we have to check, whether $v =: (n_1,\ldots, n_k, m_1, \ldots, m_k) \in S$. The semilinear set $S$ is a finite union
 of linear sets. Hence, we can assume that $S$ is linear itself. Let
 $$
 S = \{ v_0 + \lambda_1 v_1 + \cdots + \lambda_l v_l \mid \lambda_1, \ldots, \lambda_l \in \mathbb{N} \},
 $$
 where $v_0, \ldots, v_l \in \mathbb{N}^{2k}$. Hence, we have to check, whether there exist $\lambda_1, \ldots, \lambda_l \in \mathbb{N}$
 such that
 $v = v_0 + \lambda_1 v_1 + \cdots \lambda_l v_l$.
 This is an instance of integer programming in the fixed dimension $2k$, which can be solved in polynomial time \cite{Len83}.
 
 For the lower bound we take elements $x,y,z \in M$ such that $x \neq y$ but $x+z = y+z$. These elements exist since
 $M$ is not cancellative.  We use an encoding of 3SAT from
\cite{berman-karpinski-2d:02}. Take a 3CNF formula
$C=\bigwedge_{i=1}^m C_i$ over $n$ propositional variables $x_1,\ldots, x_n$, and let
$C_i = (\alpha_{j_1} \vee \alpha_{j_2} \vee \alpha_{j_3})$,
where $1 \leq j_1 < j_2 < j_3 \leq n$ and every $\alpha_{j_k}$ is either $x_{j_k}$ or 
$\neg x_{j_k}$.  For every $1 \leq i \leq m$ we define an MTDD $G_i$ as follows:
The variables are $A_0,\ldots, A_n$, and $B_0, \ldots, B_{n-1}$,
where $B_i$ produces the vector of length $2^i$ with all entries equal to $0$
(which corresponds to the truth value {\sf true}, whereas $z \in M$ corresponds to the truth
value {\sf false}).
For the variables $A_0,\ldots, A_n$ we add the following rules:
For every $1 \leq j \leq n$ with $j \not\in \{j_1, j_2, j_3\}$ we take the rule $A_j \to (A_{j-1}, A_{j-1})$.
For every  $j \in \{j_1, j_2, j_3\}$ such that $\alpha_j = x_j$  (resp. $\alpha_j =  \neg x_j$)
we take the rule
$$
A_j \to (A_{j-1}, B_{j-1})  \ \ \text{( resp. } A_j \to (B_{j-1},A_{j-1}) ).
$$
Finally add the rule $A_0 \to z$ and let
$A_n$ be the start variable of $G_i$.
Moreover, let $G$ (resp. $H$) be the 1-dimensional MTDD that produces the vector consisting of $2^n$ many $x$-entries (resp. $y$-entries).
Then, $\val(G) + \val(G_1) + \cdots + \val(G_m) = \val(H) + \val(G_1) + \cdots + \val(G_m)$ if and only if $C$ is unsatisfiable.
\qed
 \end{proof}
 It is worth noting that in the above proof for {\sf coNP}-hardness, we use addition only at the top level in a non-nested way.

%
%
%

\section{Computing determinants and matrix powers}\label{sec-hard}

In this section we present several completeness results for MTDDs over the rings $\ZZ$
and $\ZZ_n$ ($n \geq 2$). It turns out that over these rings,
computing determinants, iterated matrix products, or matrix powers are infeasible
for MTDD-represented input matrices, assuming standard assumptions from complexity
theory. All completeness results in this section are formulated for MTDDs, but they remain
valid if we add addition.
In fact, all upper complexity bounds in this section even hold for 
matrices that are represented by circuits as defined in Section~\ref{sec-bool-circuits}.

All hardness results in this section rely on the fact that
the adjacency matrix of the configuration graph of a 
polynomial space bounded machine can be produced by a small
MTDD (with terminal entries $0$ and $1$), see Section~ \ref{sec-config-graph}.
This was also shown in  \cite[proof of Theorem~7]{feigenbaum-kannan-vardi-viswanathan:99} in the context of OBDDs.
We will prove this fact using an automata theoretic framework that we introduce in Section~\ref{sec-auto-framework}. 
This framework will simplify
the technical details in the proofs in Sections~\ref{sec-det-hard} and \ref{sec-powers-hard}.

\subsection{Layered automata and MTDDs} \label{sec-auto-framework}

In the following we will use some standard notations concerning 
finite automata. 
A {\em layered DFA (deterministic finite automaton) of depth} $m$
is an acyclic DFA $A$ for which the state set $Q$ 
is partitioned into $m+1$ layers $Q_0, \ldots, Q_m$ such that:
\begin{itemize}
\item  $Q_0$ only contains the initial state $q_0$ of $A$. 
\item $Q_m$ only contains two states, one of which is the unique final
state of $A$.
\item Every transition goes from layer $Q_i$ to $Q_{i+1}$ for some $0
\leq i < m$. 
\item For every state $q \in Q_i$ ($1 \leq i < m$) and every input letter
$a$ there exists an $a$-labeled transition from $q$ to 
a state from layer $Q_{i+1}$.  
\end{itemize}
The {\em convolution} of a string $u = a_1\cdots a_n \in \Sigma^*$ 
and a string  $v = b_1\cdots b_n  \in \Gamma^*$ is the string 
$u \otimes v = (a_1,b_1)\cdots (a_n,b_n)$  over the alphabet
$\Sigma \times \Gamma$.  
A layered DFA $A$ of depth $m$ with input alphabet
$\{0,1\}\times\{0,1\}$
defines the directed graph $\mathcal{G}(A)$ with node set $\{0,1\}^m$ (all binary 
strings of length $m$) and an edge from $u \in\{0,1\}^m$ to $v \in\{0,1\}^m$
if and only if $u \otimes v \in L(A)$. So, $A$ recognizes the edge relation of $\mathcal{G}(A)$.
Layered DFAs over the paired alphabet $\{0,1\}\times\{0,1\}$
are basically the same as MTDDs over $\{0,1\}$ (or OBDDs with the interleaved variable ordering):

\begin{lemma} \label{lemma-layered}
One can construct in logspace 
from a given layered DFA $A$ over the paired alphabet 
$\{0,1\}\times\{0,1\}$ an MTDDs $G$ over $\{0,1\}$ such that $\val(G)$ 
is the adjacency matrix of the graph $\mathcal{G}(A)$, and vice versa.
\end{lemma}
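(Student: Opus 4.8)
The plan is to exhibit the two translations explicitly and verify they are logspace computable and adjacency-preserving. For the direction from a layered DFA $A$ of depth $m$ (with $2m$ alphabet positions after pairing) to an MTDD $G$, I would introduce one variable $X_q$ for every state $q$ of $A$, and declare $X_q$ to have height $m-i$ whenever $q$ lies in layer $Q_i$. For a state $q\in Q_i$ with $i<m$, the four transitions out of $q$ on the letters $(0,0),(0,1),(1,0),(1,1)$ lead to states $q_{1,1},q_{1,2},q_{2,1},q_{2,2}$ in $Q_{i+1}$ respectively (using the interleaving convention that the first coordinate bit indexes the row half and the second the column half); I set the rule $X_q \to \left(\begin{smallmatrix} X_{q_{1,1}} & X_{q_{1,2}} \\ X_{q_{2,1}} & X_{q_{2,2}}\end{smallmatrix}\right)$. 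For the two states in $Q_m$ I set $X_q \to 1$ if $q$ is the final state and $X_q \to 0$ otherwise. The start variable is $X_{q_0}$. An easy induction on $m-i$ shows that $\val(X_q)$ is the $2^{m-i}\times 2^{m-i}$ matrix whose $(\bar u,\bar v)$ entry is $1$ iff the string $\bar u\otimes\bar v$ drives $A$ from $q$ into the final state; applied at $q_0$ this is exactly the adjacency matrix of $\mathcal G(A)$. Writing out the rules only requires reading off the transition function and computing layer indices, so it is clearly in logspace.

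For the converse, given an MTDD $G=(N,P,A_0)$ over $\{0,1\}$ of height $m$, I would first observe that an MTDD (no addition) has exactly one rule per variable, so the variables of height $i$ together with the matrix rules form a structure that is already essentially a layered automaton read backwards: take $Q_{m-i}$ to be the set of variables of height $i$ (so $Q_0=\{A_0\}$ and $Q_m=N_0$), let the start state be $A_0$, and for a matrix rule $A \to \left(\begin{smallmatrix} A_{1,1} & A_{1,2} \\ A_{2,1} & A_{2,2}\end{smallmatrix}\right)$ put a transition from $A$ on letter $(r{-}1,c{-}1)$ to $A_{r,c}$ for $r,c\in\{1,2\}$. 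The height-$0$ variables with rule $B\to 1$ are collapsed to a single final state and those with $B\to 0$ to a single non-final state (merging them is harmless since they have no outgoing transitions and are distinguished only by finality). One small point to handle: the MTDD may not use all its variables below $A_0$, and a variable of height $0$ might never appear; padding with unreachable states, or simply restricting to the reachable part, keeps the four layered-DFA conditions satisfied (in particular totality of the transition function holds because every non-terminal variable has a full $2\times2$ matrix rule). The same induction as above shows $u\otimes v\in L(A)$ iff $\val(G)_{u,v}=1$, i.e. $A$ recognizes the edge relation of the graph whose adjacency matrix is $\val(G)$.

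The main obstacle is not conceptual but bookkeeping: one must be careful that the interleaving convention for the variable ordering $a_1,b_1,\dots,a_m,b_m$ is matched consistently in both directions (which bit of the pair selects rows versus columns, and which end is the most significant), and one must make sure the degenerate layer $Q_m$ really contains exactly two states as required by the definition — this forces the merging of all the $0$-terminals into one state and all the $1$-terminals into another, and one should check this merging does not change the recognized language. Finally I would remark that both constructions manifestly run in logarithmic space, since each produces its output by a single pass over the rules (resp. transitions) of the input together with arithmetic on indices bounded by $|N|$ (resp. $|Q|$), which completes the proof.

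\qed
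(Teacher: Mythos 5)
Your construction is correct and is essentially the paper's own proof: variables are the states, the four outgoing transitions of a state in layer $Q_i$ become the four blocks of its matrix rule, the two states of the last layer become the terminal rules $p\to 0$ and $p\to 1$, and the converse reads the rules back as transitions. The paper dismisses the reverse direction with ``works similarly,'' so your extra care there (merging all $0$- and $1$-terminals into two states to meet the two-state requirement on $Q_m$, and handling unreachable variables) is a welcome elaboration rather than a deviation.
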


\begin{proof}
The variables of $G$ are the states of the automaton $A$, and the start
variable is the initial state $q_0$. Let $P_0, \ldots, P_k$
be the layers of $A$ and let $P_k = \{ p_0,p_1\}$, where
$p_1$ is the final state of $A$.
First, we add the transitions $p_i \to i$ for $i \in \{0,1\}$ to $G$.
Next, let $p \in P_i$ for some $i<k$ and let
$p \xrightarrow{(a,b)} p_{a,b}$ for $a, b \in \{0,1\}$
be the four outgoing transitions from state $p$.
Then we add the rule
$$
p \to \left( \begin{array}{cc}
p_{0,0} \ & \ p_{0,1}  \\
p_{1,0} & p_{1,1} \end{array} \right)
$$
to $G$. The reverse transformation works similarly.
\qed
\end{proof}

\subsection{Generating the configuration graph of a Turing machine by an MTDD}   \label{sec-config-graph}

Let $M$ be a nondeterministic Turing machine (NTM).
Let $Q$ be the set of states of $M$, and let $\Gamma$ be the tape
alphabet of $M$, where $Q \cap \Gamma = \emptyset$.
As usual, configurations of $M$ are encoded as words from
$\Gamma^*Q\Gamma^*$. For two configurations $c_1,c_2 \in
\Gamma^*Q\Gamma^*$ we write $c_1 \vdash_M c_2$ if $M$ can move in one
transition from configuration $c_1$ to configuration $c_2$.
Let us fix an injective encoding $f_M : Q \cup \Gamma \to \{0,1\}^{k_M}
\setminus 0^*$, which is extended to a homomorphism 
from $(Q \cup \Gamma)^*$ to $\{0,1\}^*$. Here, $k_M$ is a large enough constant.
 We exclude words only consisting of $0$'s from the range of $f_M$
 for technical reasons.
The following proposition makes use of the folklore fact (see e.g. the work
on automatic structures) that 
a Turing machine transition only locally modifies
the current configuration and that this local modification can be recognized
by a finite automaton. This locality is not destroyed by an application of the coding
function $f_M$:

\begin{lemma} \label{reach}
Let $M$ be a fixed NTM. For $m \in \mathbb{N}$, one can compute in space $O(\log m)$ 
a layered DFA $A(M,m)$ of depth $k_M(m+1)$
over the paired alphabet
$\{0,1\}\times\{0,1\}$ such that
$L(A(M,m)) = \{ f_M(c_1) \otimes f_M(c_2) \mid c_1, c_2  \in \Gamma^*Q\Gamma^*,
|c_1| = |c_2|=m+1, c_1 \vdash_M c_2 \}$.
\end{lemma}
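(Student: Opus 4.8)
The plan is to build the layered DFA $A(M,m)$ directly on the level of codewords, exploiting the locality of Turing machine transitions. First I would recall the classical description of $\vdash_M$ at the symbol level: if $c_1 = \gamma_0 \cdots \gamma_m$ and $c_2 = \delta_0 \cdots \delta_m$ are two length-$(m+1)$ strings over $Q \cup \Gamma$ each containing exactly one state symbol, then $c_1 \vdash_M c_2$ is determined by a finite window condition --- namely, one can check, reading the pairs $(\gamma_i,\delta_i)$ from left to right, that $\gamma_i = \delta_i$ outside a window of constant width around the head position of $c_1$, that inside the window the local rewriting matches some transition of $M$, and that exactly one state symbol occurs on each side. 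A standard automaton with a constant number of states (depending only on $M$, and in particular keeping in its finite memory the last two symbols seen plus a flag whether a state symbol has already been read on each track) recognises $\{ c_1 \otimes c_2 \mid |c_1| = |c_2| = m+1,\ c_1 \vdash_M c_2\}$ as a subset of $\big((Q\cup\Gamma)\times(Q\cup\Gamma)\big)^{m+1}$. This automaton is uniform in $m$: it is literally the same automaton, just run on longer inputs, so it is trivially computable in space $O(\log m)$.

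Next I would compose this symbol-level automaton with the coding homomorphism $f_M$. Since $f_M$ maps each symbol of $Q\cup\Gamma$ to a fixed block in $\{0,1\}^{k_M}\setminus 0^*$, I would replace each symbol-level transition $p \xrightarrow{(\gamma,\delta)} q$ by a length-$k_M$ chain of transitions over $\{0,1\}\times\{0,1\}$ that reads the pair of blocks $f_M(\gamma) \otimes f_M(\delta)$; the intermediate states of these chains are new, private to each original transition. Because the blocks all have the same length $k_M$, this yields a layered structure: after reading $j$ symbols the automaton is in a ``boundary'' state at layer $j\cdot k_M$, and the chains fill in the layers strictly in between. The total depth is $k_M(m+1)$, as required, and the resulting automaton is deterministic on $\{0,1\}\times\{0,1\}$ inputs precisely because $f_M$ is a prefix code into equal-length blocks and the symbol-level automaton was deterministic. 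To meet the remaining clauses of the definition of a layered DFA --- a unique initial state, exactly two states in the last layer (one of them final), and totality of the transition function at every non-final layer --- I would add a sink/reject state and complete all missing transitions into it, and merge all accepting boundary states of layer $k_M(m+1)$ into a single final state, keeping the sink as the second state of that layer. All of this is a local surgery on a description of constant size plus $O(\log m)$ counters, so it stays within space $O(\log m)$.

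The one point that needs a little care --- and which I expect to be the main obstacle --- is handling the boundary cases of the head position cleanly when the head sits at the left or right end of the tape, and making sure the constant-memory automaton really can enforce the global ``exactly one state symbol on each side'' condition while only remembering a bounded window; this is the usual bookkeeping behind automatic presentations of configuration graphs, and I would deal with it by having the automaton carry two one-bit flags (``have I already seen the state symbol on track~1 / track~2''), rejecting if a second one appears and also rejecting at the end if either flag is still unset. The exclusion of $0^*$ from the range of $f_M$ is used only to guarantee that the codeword blocks are nonzero, which is convenient later (it lets $0^*$ serve as a distinguished ``dummy'' node) and costs nothing here since it is a property we build into the fixed encoding $f_M$. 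Once these details are discharged, $L(A(M,m))$ is exactly the claimed set, and the whole construction is carried out in logarithmic space, which completes the proof. \qed
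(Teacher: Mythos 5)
Your proposal is correct and follows essentially the same route as the paper: both rest on the folklore locality of Turing-machine transitions to get a constant-size automaton for the coded transition relation (independent of $m$), and then turn it into a layered DFA of depth $k_M(m+1)$ in logspace (the paper by intersecting with a length-restricting layered DFA and adding dummy states, you by expanding symbol-level transitions into $k_M$-chains and completing with a sink). Your write-up just fills in details the paper leaves implicit.
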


\begin{proof}
Due to the local nature of Turing machines, there exists a fixed
DFA $A(M)$ over the alphabet
$\{ (0,0), (0,1), (1,0), (1,1)\}$ such that
$$
L(A(M)) = \{ f_M(c_1) \otimes f_M(c_2) \mid c_1, c_2  \in \Gamma^*Q\Gamma^*,
|c_1| = |c_2|, c_1 \vdash_M c_2 \} .
$$
Using the classical product construction, we intersect this automaton
with a layered DFA of depth $k_M(m+1)$ 
for the language $\{0,1\}^{k_M(m+1)}
\otimes \{0,1\}^{k_M(m+1)}$. Such an automaton can be constructed in
logspace. By adding dummy states to the resulting product automaton, we obtain
a layered DFA with the desired properties.
\qed
\end{proof}
For the layered DFA $A(M,m)$ from
Lemma~\ref{reach}, the graph $\mathcal{G}(A(M,m))$
is the configuration graph of $M$ on configurations of tape length $m$.
With Lemma~\ref{lemma-layered} we can compute in space $\log m$ 
an MTDD for the adjacency matrix of
this configuration graph.

\subsection{Hardness of the determinant for MTDDs}  \label{sec-det-hard}

Recall that the determinant of a matrix 
$A = (a_{i,j})_{1 \leq i,j \leq n}$ (over any ring) can be computed as follows, where $S_n$ denotes the set of all
permutations on $\{1,\ldots,n\}$:
$$
\det(A) = \sum_{\sigma\in S_n} \text{sgn}(\sigma) \cdot \prod_{i=1}^n A_{i,\sigma(i)}.
$$
Here, $\text{sgn}(\sigma)$ denotes the signum of the permutation
$\sigma$, which is $1$ (resp., $-1$) if $\sigma$ is a product of an
even (resp., odd) number of transpositions.
If $A$ is the adjacency matrix of a directed graph
$\mathcal{G}$, then we can compute $\det(A)$ by taking the sum
over all cycle covers of $\mathcal{G}$ (a cycle cover of $\mathcal{G}$
is a subset of the edges of $\mathcal{G}$ such that the corresponding
subgraph is a disjoint union of directed cycles), where each cycle
cover contributes to the sum by the signum of the corresponding permutation.
Recall that $\det(A) \neq 0$ if and only if $A$ is invertible.
The value $\det(\val(G))$ for an MTDD $G$ may be of doubly exponential size (and hence
needs exponentially many bits):  The diagonal $(2^n\times 2^n)$-matrix with $2$'s on the diagonal  
has  determinant  $2^{2^n}$.

By the next theorem, computing the determinant of an MTDD-represented
matrix is indeed difficult. To prove this result we use a reduction 
of Toda showing that computing the determinant of an explicitly given 
integer matrix is $\mathsf{GapL}$-complete \cite{Toda91countingproblems}
(which in turn is based on Valiant's classical construction for the universality of the
determinant \cite{Valiant79a}).
We apply this reduction to configuration graphs of polynomial space bounded Turing machines,
whose adjacency matrices can be produced by small MTDDs.

\begin{theorem} \label{theorem-det}
The following holds for every ring $S \in \{ \ZZ\} \cup \{ \ZZ_n \mid n \geq 2\}$:
\begin{enumerate}[(1)]
\item The set $\{ G \mid G \text{ is an MTDD over } S, \det(\val(G)) = 0 \}$ is
$\PSPACE$-complete.
\item
The function $G \mapsto \det(\val(G))$ with $G$ an MTDD over $\ZZ$
is  $\mathsf{GapPSPACE}$-complete.
\end{enumerate}
\end{theorem}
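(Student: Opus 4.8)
The plan is to reduce the truth of a $\PSPACE$-complete problem --- acceptance by a polynomial-space bounded NTM --- to the (non)vanishing of the determinant of an MTDD-represented matrix, and in the reverse direction to place the problem in $\PSPACE$ via Lemma~\ref{PSPACE}. For the upper bound I would argue as follows: given an MTDD $G$ of height $h$, the matrix $\val(G)$ has dimension $2^h \times 2^h$ with $h \leq |G|$, and each entry $\val(G)_{i,j}$ can be computed by a $+$-circuit of size $O(|G|)$ by Proposition~\ref{prop:simple-constructions}(3), hence in polynomial time (so in particular the explicit matrix has entries of at most exponentially many bits and can be written down in $\FPSPACE$). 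Since the determinant of an explicitly given integer matrix lies in $\FpolyL$ (by \cite{Coo85}, it is in uniform $\NC^2 \subseteq \FSPACE(\log^2 n)$), Lemma~\ref{PSPACE-counting} gives that $G \mapsto \bin(\det(\val(G)))$ is in $\FPSPACE$; then testing whether this output equals $0$ is in $\PSPACE$ (it is the preimage under an $\FPSPACE$ function of the trivial $\polyL$ language $\{0\}$, via Lemma~\ref{PSPACE}), and membership in $\mathsf{GapPSPACE}$ follows from the characterization of $\mathsf{GapPSPACE}$ via $\FPSPACE$ stated in the preliminaries. The same argument works for $\ZZ_n$ (reduce entries mod $n$; the determinant mod $n$ is still in $\FpolyL$).

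For the hardness part I would take a $\PSPACE$-complete language $L$ accepted by a polynomial-space NTM $M$, and use the standard reduction: $x \in L$ iff there is a path from the initial configuration to the (unique) accepting configuration in the configuration graph of $M$ on space $p(|x|)$. By Lemma~\ref{reach} and Lemma~\ref{lemma-layered}, the adjacency matrix $B$ of this configuration graph (of dimension $2^{k_M(p(|x|)+1)}$) is produced by an MTDD computable in space $O(\log|x|)$, hence in polynomial time. Now I would invoke Toda's $\mathsf{GapL}$-completeness construction for the determinant \cite{Toda91countingproblems}: Toda's reduction transforms a (layered, or more generally logspace-uniform) directed graph $\mathcal{G}$ together with two designated vertices $s,t$ into a new graph $\mathcal{G}'$ whose cycle covers, weighted by sign, count (with alternating sign) the $s$-$t$ paths in $\mathcal{G}$, so that $\det$ of the adjacency matrix of $\mathcal{G}'$ equals (up to sign) the number of $s$-$t$ paths. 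The crucial point is that Toda's transformation is \emph{local}: it only adds self-loops on non-$s,t$ vertices, reverses/reweights a few edges incident to $s$ and $t$, and leaves the rest of the edge relation intact --- exactly the kind of operation that the automata-theoretic framework of Section~\ref{sec-auto-framework} handles. So I would re-express Toda's gadget as a logspace modification of the layered DFA $A(M,m)$ (or directly on the MTDD $B$, using the polynomial-time matrix operations of Section~\ref{sec-MTDD}: adding the identity matrix on a submatrix, scaling or negating designated rows/columns via tensor products with $2\times 2$ constant matrices, and summing), obtaining in polynomial time an MTDD $G$ with $\det(\val(G)) = \pm(\text{number of accepting paths of } M \text{ on } x)$. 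Since that number is positive iff $x \in L$, part~(1) follows; and since $G \mapsto \det(\val(G))$ realizes (the $\mathsf{GapPSPACE}$ function) $x \mapsto \pm\#\text{accepting paths}$, part~(2) follows from $\mathsf{GapPSPACE}$-hardness of counting accepting paths of poly-space NTMs (Ladner \cite{Ladner89}) --- note the doubly-exponential path counts match the doubly-exponential determinant values. For $\ZZ_n$, one must additionally argue that reachability survives the reduction mod $n$; here one uses the freedom in Toda's construction to make all $s$-$t$ path contributions have the same sign (Valiant's universality gives a graph where the permanent, not just a signed sum, is computed), so that the count is nonzero mod $n$ whenever it is nonzero --- or, more carefully, one chooses the machine $M$ so that the number of accepting paths is always $0$ or $1$ (which is possible since $L \in \PSPACE = \NPSPACE$ and one can make accepting computations unique), making the determinant $0$ or $\pm 1$ and hence nonvanishing mod every $n \geq 2$ exactly when $x \in L$.

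The main obstacle I expect is the bookkeeping in lifting Toda's (or Valiant's) determinant gadget to the succinct/layered setting while (i) keeping everything logspace- (equivalently polynomial-time-) computable on the MTDD, and (ii) controlling signs --- both the global sign $\pm$ in front of the path count, and, for the $\ZZ_n$ case, ensuring the relevant count is a nonzero residue. Concretely, Toda's construction adds a self-loop at every internal vertex, which multiplies that vertex's contribution and fixes the sign of each cycle cover; encoding "self-loop at every vertex except two distinguished ones" requires recognizing the distinguished vertices (initial and accepting configuration of $M$) by a small automaton, which is fine since those configurations have a fixed, easily-decodable form, but the interaction of the self-loops with the layered structure (the configuration graph is naturally bipartite/layered only in a limited sense) needs care --- one typically passes to the graph of the relation $\vdash_M^{\le 1}$ (reflexive closure) or pads with an extra layer. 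I would handle this by working with the reflexive configuration graph from the start and absorbing Toda's self-loops into it, and by choosing $M$ to have unique accepting computations so that the sign issues collapse. Everything else --- the MTDD manipulations, the complexity-class bookkeeping --- is routine given the tools already assembled in the paper.
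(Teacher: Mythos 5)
Your upper bounds and your lower bound for part~(1) follow essentially the same route as the paper: compute $\val(G)$ entry-wise in $\FPSPACE$, compose with the $\NC^2$ determinant via Lemmas~\ref{PSPACE} and~\ref{PSPACE-counting}, and for hardness build the MTDD for the configuration graph of a (deterministic, non-looping) polynomial-space machine via Lemmas~\ref{reach} and~\ref{lemma-layered}, then close a cycle from the accepting back to the initial configuration and add self-loops everywhere else so that the determinant equals the number of accepting computations, which is $0$ or $1$ and hence settles the $\ZZ_n$ case as well. One point you should make explicit rather than delegate to ``Toda's construction'': the machine must be assumed non-looping (enforceable by a step counter), since otherwise the configuration graph contains directed cycles and there are cycle covers not passing through the distinguished vertex, which would contribute spurious signed terms to the determinant.

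There is, however, a genuine gap in your argument for part~(2). You conclude $\mathsf{GapPSPACE}$-hardness from the fact that $\det(\val(G))$ equals $\pm$ (a single global sign times) the number of accepting computations of one machine, citing Ladner. But under the paper's notion of reduction between functions ($f(x)=g(h(x))$ with no post-processing), a function whose range is contained in $\NN$ (or in $-\NN$) cannot be hard for $\mathsf{GapPSPACE}$: an arbitrary $\mathsf{GapPSPACE}$ function is a \emph{difference} $f_1-f_2$ of two $\#\mathsf{PSPACE}$ functions and takes both positive and negative values, so it cannot factor through a nonnegative count. Your construction therefore only establishes $\#\mathsf{PSPACE}$-hardness. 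To get $\mathsf{GapPSPACE}$-hardness you must realize the difference \emph{inside} a single determinant: the paper does this by taking two polynomial-space machines $M_1,M_2$, forming the disjoint union of their configuration graphs attached to a common distinguished vertex, and arranging the closing cycles so that accepting paths of $M_1$ yield odd-length cycles (sign $+1$) while accepting paths of $M_2$ yield even-length cycles (sign $-1$), whence $\det = \#\mathrm{acc}(M_1,x)-\#\mathrm{acc}(M_2,x)$. Your sketch contains the ingredients for this (you note that cycle covers are counted ``with alternating sign''), but as written the sign is global rather than per-machine, and the two-machine gadget is missing.
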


\begin{proof}
Let us start with the upper bounds.
Membership in $\PSPACE$ in statement (1)
can be shown as follows:
Since the determinant of an explicitly given integer matrix can 
be computed in $\FSPACE(\log^2(n))$, one
 can check in $\DSPACE(\log^2(n))$ whether the determinant
of an explicitly given integer matrix is zero.
Moreover, from a given MTDD $G$ we can compute 
the matrix $\val(G)$ in polynomial space. For this, it suffices to compute for 
$G$ and given positions $i,j$ the entry $\val(G)_{i,j}$ in $\PSPACE$; then we can iterate
over all matrix positions $(i,j)$. Actually, a specific matrix entry $\val(G)_{i,j}$ can be even
computed in polynomial time by Theorem~\ref{prop:simple-constructions}(3). 
Membership in $\PSPACE$ for MTDD follows
from Lemma~\ref{PSPACE}. Note that the same argument even applies for matrices
that are represented by boolean circuits in the sense of Section~\ref{sec-bool-circuits}.

The upper bounds in (2) can be shown in the same way using 
Lemma~\ref{PSPACE-counting} and the fact that computing the determinant
of an explicitly given integer matrix with binary
coded integer entries is in $\mathsf{GapL}$.

Let us now prove the lower bound. We start with (1). Let us take a deterministic polynomial space bounded Turing
machine $M$. 
Let $q_0$ be the initial state of $M$ and $q_f$ the unique accepting
state. Let $\Box$ be the blank symbol. 
We can assume that $M$ is non-looping in the sense that there does not exist a
  configuration $c$ such that $c \vdash_M^+ c$. This property can be
  ensured by adding a binary  counter to
  $M$ that is decremented during each transition of the original machine. 
Moreover,  we can assume that
every accepting computation path of $M$ has odd length (i.e.,
  an odd number of transitions), and that every tape cell contains 
$\Box$ as soon as $M$ enters the accepting state $q_f$.
Let $p(n)$ (a polynomial) be the space bound of $M$ and
let $x$ be an input for $M$ of length $n$. Moreover, let $m = p(n)$
and $k = k_M(m+1)$.
By Lemma~\ref{reach} we can compute in space $O(\log m) = O(\log n)$ a layered DFA $A(M,m)$
of depth $k$  
such that
$$
L(A(M,m)) = \{ f_M(c_1) \otimes f_M(c_2) \mid c_1, c_2  \in \Gamma^*Q\Gamma^*,
|c_1| = |c_2|=m+1, c_1 \vdash_M c_2 \}.
$$
Let $w_0 = f_M(q_0 x\Box^{m-n})$ (resp., $w_f = f_M(q_f \Box^m)$)
be the encoding of the initial (resp., accepting) configuration.
Recall that we assume that $0^k$ does not belong to $f_M(\Gamma^*Q\Gamma^*)$.
By taking the direct product of $A(M,m)$
with a layered DFA for the language
$$
K = \{ 0^k \otimes w_0, w_f \otimes 0^k\} 
\cup \{ w \otimes w \mid w \in \{0,1\}^k \setminus 0^* \}
$$
(which can be  computed in space $\log m$), we  obtain a 
layered DFA $A(M,x)$ with
$L(A(M,x)) = L(A(M,m)) \cup K$.
Let $\mathcal{G}(M,x)$ be the directed  graph $\mathcal{G}(A(M,x))$
defined by the DFA $A(M,x)$. Its node set is
$\{0,1\}^k$ and there is an edge from $v$ to $w$ if and only if 
$v \otimes w \in L(A(M,x))$. Let $\text{adj}(M,x)$ be the adjacency
matrix of $\mathcal{G}(M,x)$.
We compute $\det(\text{adj}(M,x))$ by considering 
cycle covers of the graph $\mathcal{G}(M,x)$. 
Note that node $0^k$ lies on a directed cycle if and only if 
there is a path from $w_0$ to $w_f$ in $\mathcal{G}(M,x)$.
Moreover, since $M$ is non-looping, every cycle cover of $\mathcal{G}(M,x)$
consists of a path from $w_0$ to $w_f$ together with the two edges
$(w_f, 0^k)$ and $(0^k, w_0)$ (such a cycle has odd
length and hence is a product of an even number of transpositions) together with loops on the remaining nodes. 
It follows that $\det(\text{adj}(M,x))$ is equal to the number of 
paths from $w_0$ to $w_f$ in $\mathcal{G}(M,x)$.
But this number is equal to the number of accepting computations
of the machine $M$ on input $x$, which is either $0$ or $1$ (since $M$ is deterministic).
By  Lemma~\ref{lemma-layered} applied to the DFA $A(M,x)$, we obtain
in logspace an MTDD $G$ (with integer entries $0$ and $1$ only) 
such that $\val(G) = \text{adj}(M,x)$. This shows the lower bound in (1).

Let us finally prove the lower bound in (2). 
Let us take two polynomial space bounded Turing
machines $M_1$ and $M_2$ with the same input alphabet. 
We can also assume that $M_1$ and $M_2$ have the same
state set $Q$ and tape alphabet $\Gamma$. In particular, we can assume that
$k_{M_1} = k_{M_2}$. Let $f = f_{M_1} = f_{M_2}$ be the binary coding
mapping for $Q \cup \Gamma$.
Let $q_0$ be the initial state of $M_1$ and $M_2$ and $q_f$ the unique accepting
state of $M_1$ and $M_2$. We make the same assumptions that we have made for $M$
in the lower bound proof for statement (1). We can also assume that the polynomial
$p(n)$ is a space bound for $M_1$ as well as $M_2$. 

Let $x$ be an input for $M_1$ and $M_2$ of length $n$,
and let $m = p(n)$, $k =k_{M_1}(m+1) = k_{M_2}(m+1)$. 
With Lemma~\ref{reach} we can construct in space $O(\log m) = O(\log n)$ layered DFAs $A(M_1,m)$
and $A(M_2,m)$ of depth $k$ such that
$$
L(A(M_i,m)) = \{ f(c_1) \otimes f(c_2) \mid c_1, c_2  \in \Gamma^*Q\Gamma^*,
|c_1| = |c_2|=m+1, c_1 \vdash_{M_i} c_2 \}.
$$
Let $w_0 = f(q_0 x \Box^{m-n})$ be the encoding of the initial configuration,
and let $w_f = f(q_f \Box^m)$ be the encoding of the unique accepting configuration.
Recall that we assumed that $0^k$ does not belong to $f(\Gamma^*Q\Gamma^*)$.

From the layered DFAs $A(M_1,m)$
and $A(M_2,m)$ we now construct a layered DFA $A(M_1,M_2,m)$
of depth $k+1$ such that
\begin{eqnarray*}
L(A(M_1,M_2,m)) & = & 
 \{ 0u \otimes 0v \mid u \otimes v \in L(A(M_1,m)) \}\; \cup \; \\ & &  \{ 1u \otimes 1v \mid u \otimes v \in L(A(M_2,m)) \}.
\end{eqnarray*}
For this we basically have to take the disjoint union of $A(M_1,m)$
and $A(M_2,m)$.
By taking the product of $A(M_1,M_2,m)$
with a layered DFA for the language
\begin{align*}
K = & \ \{ 0^{k+1} \otimes 0w_0, 0w_f \otimes 0^{k+1}, 0^{k+1} \otimes 1w_0,
1w_f \otimes 10^k, 10^k \otimes 0^k \}\
\cup \\
& \ \{ w \otimes w \mid w \in \{0,1\}^{k+1} \setminus 0^* \}
\end{align*}
(which can be easily constructed in space $O(\log k) = O(\log n)$), we can obtain a 
layered DFA $A(M_1,M_2,x)$ with
$$
L(A(M_1,M_2,x)) =  L(A(M_1,M_2,m)) \cup K .
$$ 
Let $\mathcal{G}(M_1,M_2,x)$ be the directed  graph $\mathcal{G}(A(M_1,M_2,x))$
defined by the layered DFA $A(M_1,M_2,x)$. This graph consists of the disjoint
union of the two graphs $\mathcal{G}(M_1,m) := \mathcal{G}(A(M_1,m))$ and $\mathcal{G}(M_2,m) := \mathcal{G}(A(M_2,m))$
(basically the configurations graphs of $M_1$ and $M_2$ on
configurations of tape length $m$) together with two nodes
$0^{k+1}$ and $10^k$ and the following edges:
\begin{itemize}
\item Edges from $0^{k+1}$ to $0w_0$ and $1w_0$
  (the copies of the initial
  configuration in the graphs $\mathcal{G}(M_1,m)$ and $\mathcal{G}(M_2,m)$).
\item  An edge from  $0 w_f$ (the copy of the accepting configuration
 in $\mathcal{G}(M_1,m)$) back to $0^{k+1}$.
\item An edge from $1w_f$ (the copy of the accepting configuration
 in $\mathcal{G}(M_2,m)$) to $10^{k}$.
\item An edge from $10^k$ back to $0^{k+1}$.
\item Loops at all nodes except for $0^{k+1}$.
\end{itemize}
Let $\text{adj}(M_1,M_2,x)$ be the adjacency
matrix of the directed graph $\mathcal{G}(M_1,M_2,x)$.
Let us compute $\det(\text{adj}(M_1,M_2,x))$ by considering 
cycle covers of the graph $\mathcal{G}(M_1,M_2,x)$. 
Note that node $0^{k+1}$ lies on a directed cycle if and only if 
there is a path from $w_0$ to $w_f$ in $\mathcal{G}(M_1,x)$
or from $w_0$ to $w_f$ in $\mathcal{G}(M_2,x)$.
Moreover, since $M$ is non-looping, every cycle cover of $\mathcal{G}(M,x)$
consists of loops together with either
\begin{itemize}
\item
a path from $0w_0$ to $0w_f$ (in $\mathcal{G}(M_1,m)$) together with the 
two edge $(0^{k+1},0w_0)$ and $(0w_f, 0^{k+1})$ (every such cycle has odd
length, and hence is a product of an even number of transpositions), or
\item
a path from $1w_0$ to $1w_f$ (in $\mathcal{G}(M_2,m)$) together with the 
three edges $(0^{k+1},1w_0)$, $(1w_f, 10^{k})$, and $(10^{k}, 0^{k+1})$
(every such cycle has even
length, and hence is a product of an odd number of transpositions).
\end{itemize}
It follows that $\det(\text{adj}(M_1,M_2,x))$ is equal to the number of 
paths from $0w_0$ to $0w_f$ in $\mathcal{G}(M_1,m)$
minus 
the number of 
paths from $1w_0$ to $1w_f$ in $\mathcal{G}(M_2,m)$.
But this number is equal to the number of accepting computations
of the machine $M_1$ on input $x$ minus
the number of accepting computations
of the machine $M_2$ on input $x$.
\qed
\end{proof}
Note that the determinant of a diagonal matrix is zero if and only if
there is a zero-entry on the diagonal. This can be easily checked
in polynomial time for a diagonal matrix produced by an MTDD.
For  $\MTDD_+$ (actually, for a sum of several MTDD-represented matrices)
we can show $\NP$-completeness of this problem:

\begin{theorem}\label{thm:MTDD-eq-0-npcomplete}
It is $\NP$-complete to check $\det(\val(G_1)+\cdots+\val(G_k)) = 0$ 
for given MTDDs $G_1, \ldots, G_k$ 
that produce diagonal matrices of the same dimension.
\end{theorem}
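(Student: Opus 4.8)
The plan is to prove membership in $\NP$ by a guess-and-verify argument based on Proposition~\ref{prop:simple-constructions}(3), and $\NP$-hardness by a reduction from $3$-SAT in the spirit of the $\coNP$-hardness reduction in the proof of Theorem~\ref{thm-identity-coNP}.

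For the upper bound I would use that a sum of diagonal $(2^h\times 2^h)$-matrices is again diagonal and that the determinant of a diagonal matrix is the product of its diagonal entries; thus $\det(\val(G_1)+\cdots+\val(G_k))=0$ if and only if there is an index $i\in\{1,\ldots,2^h\}$ (where $h=\height(G_1)=\cdots=\height(G_k)$) with $\sum_{j=1}^{k}\val(G_j)_{i,i}=0$. Since $h$ is bounded by the input length, such an $i$ can be guessed; by Proposition~\ref{prop:simple-constructions}(3) the entries $\val(G_j)_{i,i}$ can then be turned into $+$-circuits and evaluated to integers of polynomial bit-length in polynomial time, after which the test $\sum_j\val(G_j)_{i,i}=0$ is immediate. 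This places the problem in $\NP$, and (as for the other upper bounds in Section~\ref{sec-hard}) the same argument applies to matrices represented by circuits of the form $M_{C,1}$.

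For the lower bound I would reduce from $3$-SAT. Given a $3$-CNF $C=\bigwedge_{i=1}^m C_i$ over variables $x_1,\ldots,x_n$, I identify a truth assignment with a bit string $b\in\{0,1\}^n$ and use $b$ to index the rows and columns of $(2^n\times 2^n)$-matrices. For each clause $C_i=(\ell_{j_1}\vee\ell_{j_2}\vee\ell_{j_3})$ with $1\leq j_1<j_2<j_3\leq n$ I construct an MTDD $G_i$ over $\NN$ producing the diagonal matrix whose $(b,b)$-entry is $0$ if $b$ satisfies $C_i$ and $1$ otherwise. Since $b$ violates $C_i$ precisely when the three bits $b_{j_1},b_{j_2},b_{j_3}$ take the unique values falsifying $\ell_{j_1},\ell_{j_2},\ell_{j_3}$, the matrix $\val(G_i)$ is the ``subcube indicator'' diagonal matrix that equals the identity on this codimension-$3$ subcube and is zero elsewhere. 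Such a matrix admits an MTDD of size $O(n\log n)$: one hard-wires the three forced bit positions into the recursion (on a forced position, one diagonal child becomes the zero matrix; on a free position both diagonal children coincide; below the last forced position the value is the identity matrix of the remaining dimension), reusing the succinct MTDDs for the identity and zero matrices from Example~\ref{examples}. Then $\sum_{i=1}^m\val(G_i)$ is the diagonal matrix whose $(b,b)$-entry counts the clauses violated by $b$, so this entry is $0$ exactly when $b\models C$; consequently $\det\bigl(\sum_{i=1}^m\val(G_i)\bigr)=0$ iff $C$ is satisfiable. Here $k=m$, all $G_i$ have polynomial size, and the construction is logspace computable. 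As in Theorem~\ref{thm-identity-coNP}, addition is used only at the top level, so the hardness already holds for this restricted fragment of $\MTDD_+$ (over $\NN$, hence also over $\ZZ$).

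The routine parts are $\NP$-membership (immediate from Proposition~\ref{prop:simple-constructions}(3)) and correctness of the reduction (immediate once the $G_i$ are available). The step that needs care — and which I expect to be the main obstacle — is the construction of the clause MTDDs: I must verify that a diagonal matrix which is the identity on a subcube of $\{0,1\}^n$ cut out by three coordinates and zero elsewhere really admits a polynomial-size MTDD, which rests on the recursive $2\times 2$-block structure of diagonal matrices together with the $O(n\log n)$-size MTDDs for the identity and zero matrices.
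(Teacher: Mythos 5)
Your proposal is correct and follows essentially the same route as the paper: membership via guessing a diagonal position and evaluating the entries with Proposition~\ref{prop:simple-constructions}(3), and hardness via a reduction from 3SAT using one polynomial-size diagonal ``clause-indicator'' MTDD per clause. The only (harmless, slightly cleaner) difference is that you mark the \emph{falsifying} assignments with $1$ and count violated clauses, which avoids the paper's extra summand $\val(G_{m+1})=-m I_{2^n}$ and keeps the construction over $\NN$.
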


\begin{proof}
Membership in $\NP$ is easy: Simply guess a position $1 \leq i \leq 2^n$, compute the values
$n_j = \val(G_j)_{i,i}$ for $1 \leq j \leq k$ and check whether $n_1 + \cdots + n_k = 0$.

Our $\NP$-hardness proof uses again the 3SAT encoding from \cite{berman-karpinski-2d:02}
that we applied in the proof of Theorem~\ref{thm-identity-coNP}.
Take a boolean formula
$C = \bigwedge_{i=1}^m C_i$, where every $C_i$ is a disjunction of three literals.   
Assume that $x_1, \ldots, x_n$ are the boolean variables that occur in $C$.
For each $1 \leq i \leq m$ let $w_i \in \{0,1\}^{2^n}$ be the binary string of length 
$2^n$, where the $j$-th symbol of $w_i$ ($1 \leq k \leq 2^n$) is $1$ if and only if 
the lexicographically $j$-th truth assignment to the variables $x_1, \ldots, x_n$
satisfies clause $C_i$. In     \cite{berman-karpinski-2d:02}  
it is shown that a fully balanced SLP (i.e., an SLP with
a fully balanced derivation tree) for $w_i$ can be constructed in logspace from the clause
$C_i$. We can use the same construction in order to construct in logspace an
MTDD $G_i$ of height $n$ such that $\val(G_i)$ is a diagonal matrix with the word
$w_i$ on the diagonal. Here is the construction: Let $C_i = (\alpha_{j_1} \vee \alpha_{j_2} \vee \alpha_{j_3})$,
where $1 \leq j_1 < j_2 < j_3 \leq n$ and every $\alpha_{j_k}$ is either $x_{j_k}$ or 
$\neg x_{j_k}$.  We take variables $A_0,\ldots, A_n$, $B_0, \ldots, B_{n-1}$, $Z_0, \ldots, Z_{n-1}$,
where $B_i$ produces the $(2^i \times 2^i)$-dimensional identity matrix $I_{2^i}$ and $Z_i$ produces the $(2^i \times 2^i)$-dimensional
zero matrix. For the variables $A_0,\ldots, A_n$ we add the following rules:
For every $1 \leq j \leq n$ with $j \not\in \{j_1, j_2, j_3\}$ take the rule
$$
A_j \to \left( \begin{array}{cc} A_{j-1} \ & \ Z_{j-1} \\[1mm] Z_{j-1} & A_{j-1} \end{array} \right) .
$$
For every  $j \in \{j_1, j_2, j_3\}$ such that $\alpha_j = x_j$ 
take the rule
$$
A_j \to \left( \begin{array}{cc} A_{j-1} \ & \ Z_{j-1} \\[1mm] Z_{j-1} & B_{j-1} \end{array} \right) .
$$
For every  $j \in \{j_1, j_2, j_3\}$ such that $\alpha_j =  \neg x_j$ 
take the rule
$$
A_j \to \left( \begin{array}{cc} B_{j-1} \ & \ Z_{j-1} \\[1mm] Z_{j-1} & A_{j-1} \end{array} \right) .
$$
Finally we take the rule
$A_0 \to 0$. Let $A_n$ be the initial variable of $G_i$. Then, indeed, $\val(G_i)$ 
is a diagonal matrix with the word
$w_i$ on the diagonal for $1 \leq i \leq m$. Let $G_{m+1}$ be an MTDD such that
$\val(G_{m+1}) = -m I_{2^n}$. Then $\val(G_1)+\cdots+\val(G_{m+1})$ is a diagonal
matrix which has a zero on the diagonal (i.e., $\det(\val(G_1)+\cdots+\val(G_{m+1}))=0$)
if and only if the 3CNF formula $C$ is satisfiable.
\qed
\end{proof}

 \subsection{Hardness of iterated multiplication and powering for MTDDs}   \label{sec-powers-hard}
 
Let us now discuss the complexity of iterated multiplication and powering.
Computing a specific entry, say at position $(1,1)$,
of the product of $n$ explicitly given matrices over $\mathbb{Z}$
(resp., $\mathbb{N}$)
is known to be complete for $\mathsf{GapL}$
(resp., $\#\mathsf{L}$) \cite{Toda91countingproblems}. 
Corresponding results hold for the computation of the $(1,1)$-entry
of a matrix power $A^n$, where $n$ is given in unary notation.
Hence, the binary encodings of these numbers can be computed in $\FSPACE(\log^2(n))$.
As usual, these problems become exponentially harder for 
matrices that are encoded by boolean circuits (see Section~\ref{sec-bool-circuits}). 
Let us briefly discuss two scenarios (recall the matrices $M_{C,1}$ and $M_{C,2}$  
defined from a circuit in Section~\ref{sec-bool-circuits}).

\begin{definition} \label{def3}
For a tuple $\overline{C}= (C_1, \ldots, C_n)$ of boolean circuits 
we can define the matrix product
$M_{\overline{C}} = \prod_{i=1}^n M_{C_i,1}$.
\end{definition}

\begin{lemma}   \label{lemma-def3}
The function $\overline{C} \mapsto (M_{\overline{C}})_{1,1}$, where every matrix 
$M_{C_i,1}$ is over $\NN$ (resp., $\ZZ$), belongs to $\mathsf{\#P}$ (resp., $\mathsf{GapP})$.
\end{lemma}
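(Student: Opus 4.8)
The plan is to express the target quantity $(M_{\overline{C}})_{1,1}$ as a sum over paths and then realize this sum as the number of accepting computations of a suitable nondeterministic polynomial time machine. First I would write out the matrix product explicitly:
$$
(M_{\overline{C}})_{1,1} = \sum_{\substack{v_0, v_1, \ldots, v_n \in \{0,1\}^N \\ v_0 = v_n = 1}} \prod_{i=1}^n (M_{C_i,1})_{v_{i-1}, v_i},
$$
where $N$ is the common exponent of the dimensions (I am glossing over the harmless assumption that all $C_i$ have matching dimensions and that index $1$ denotes the all-zero bit string). Here each entry $(M_{C_i,1})_{v_{i-1},v_i}$ is itself a nonnegative integer with at most polynomially many bits, obtained by evaluating the circuit $C_i$ on the pair of bit strings $(v_{i-1}, v_i)$.

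The key step in the $\mathsf{\#P}$ case (matrices over $\NN$) is to build an NP-machine that, on input $\overline{C}$, nondeterministically guesses the intermediate index vectors $v_1, \ldots, v_{n-1} \in \{0,1\}^N$ (with $v_0 = v_n = 1$ fixed), then for each $i$ evaluates the circuit $C_i$ on $(v_{i-1}, v_i)$ in polynomial time to obtain the binary number $a_i := (M_{C_i,1})_{v_{i-1}, v_i}$, and finally nondeterministically guesses, for each $i$, an integer in the range $\{1, \ldots, a_i\}$ (equivalently, a bit string below $a_i$), accepting iff all these guesses are valid. The number of accepting paths starting from a fixed choice of $v_1, \ldots, v_{n-1}$ is exactly $\prod_{i=1}^n a_i$, so summing over all guesses of the $v_i$ gives precisely $(M_{\overline{C}})_{1,1}$. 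All the computation is polynomial time since circuits are evaluated in polynomial time and $n$, $N$, and the bit lengths of the $a_i$ are all polynomially bounded; hence the function is in $\mathsf{\#P}$.

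For the $\mathsf{GapP}$ case (matrices over $\ZZ$) the entries $a_i$ can be negative, so the product $\prod_i a_i$ has a sign determined by the parity of the number of negative factors. The standard trick is to write the function as a difference of two $\mathsf{\#P}$ functions: one NP-machine behaves as above but additionally computes the sign $\mathrm{sgn}(a_i) \in \{+1,-1\}$ of each factor (the sign bit is part of the circuit output under the chosen binary encoding of integers) and accepts only when the total number of negative factors along the guessed path is even, while a second machine accepts only when that number is odd; both run the same guessing of magnitudes $|a_i|$. The difference of their accepting-path counts is exactly $(M_{\overline{C}})_{1,1}$, which places the function in $\mathsf{GapP}$ by definition of $\mathsf{GapP}$ as the class of differences of two $\mathsf{\#P}$ functions.

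The main obstacle is bookkeeping rather than conceptual: one must be careful that the encoding of the indices as length-$N$ bit strings, the handling of the fixed endpoints $v_0 = v_n = 1$, and the extraction of sign and magnitude from the circuit outputs $C_i(v_{i-1}, v_i)$ are all done uniformly and in polynomial time, and that the convention "$0 \cdot a = 0$" in the semiring is respected (a zero factor kills the whole path, which is automatically handled since guessing an integer in $\{1, \ldots, 0\}$ has no valid choice). Apart from these routine details, the reduction is direct.
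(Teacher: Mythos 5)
Your proof is correct and follows essentially the same route as the paper: both expand $(M_{\overline{C}})_{1,1}$ as a sum over intermediate index sequences of products of circuit-computed entries, guess the indices nondeterministically, and arrange for each guessed sequence to contribute exactly the (signed) product as accepting paths, with the $\mathsf{GapP}$ case handled as a difference of two such $\mathsf{\#P}$ machines split by the sign of the product. The only difference is a cosmetic one in how the $\prod_i a_i$ accepting paths are generated (per-factor guesses below each $a_i$ versus computing the product first and then branching $\lceil\log a\rceil$ times), which does not affect correctness.
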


\begin{proof}
Let us first show the result for $\mathsf{\#P}$.
Let $M_{C_i,1} = (a^{(i)}_{j,k})_{1 \leq j,k \leq 2^m}$, where $m=|\overline{x}|=|\overline{y}|$. We have
\begin{equation} \label{multiple-sum}
\big(\prod_{i=1}^n M_i \big)_{1,1} = \sum_{i_1=1}^{2^m} \sum_{i_2=1}^{2^m} \cdots \sum_{i_{n-1}=1}^{2^m}
a^{(1)}_{1,i_1} a^{(2)}_{i_1,i_2} \cdots a^{(n-1)}_{i_{n-2},i_{n-1}} a^{(n)}_{i_{n-1},1} .
\end{equation}
We have to come up with
a nondeterministic polynomial time Turing machine $M$ that has that
many accepting computation paths on  input $(C_1,\ldots, C_n)$. Using
$(n-1) \cdot m$ binary branchings, the machine $M$ can produce an
arbitrary tuple $(i_1,\ldots,i_{n-1})$, where the numbers 
$1 \leq i_1,\ldots, i_{n-1} \leq 2^m$ are written down in binary
notation. Next, 
we can compute in deterministic polynomial time 
the binary codings of all natural numbers 
$a^{(1)}_{1,i_1}, a^{(2)}_{i_1,i_2}, \ldots, a^{(n-1)}_{i_{n-2},i_{n-1}}, a^{(n)}_{i_{n-1},1}$. 
Then we compute the product
$a$ of these numbers again deterministically in polynomial time.
If $a=0$ then we reject on the current computation path (this corresponds to a $0$ 
in the multiple sum \eqref{multiple-sum}). Otherwise,
using the binary coding of $a > 0$ the machine branches $\lceil \log a
\rceil$ many times in order to produce $a$ many accepting computation paths.

For the statement concerning $\mathsf{GapP}$ one can argue similarly.
We have to come up with two polynomial space bounded machines
such that $\big(\prod_{i=1}^{m} M_i \big)_{1,1}$
is equal to the number of accepting computations of the first
machine minus the number of accepting computations of the second
machine. These two machines work as above, but the first (resp. second) machine 
only produces 
$a = a^{(1)}_{1,i_1}, a^{(2)}_{i_1,i_2}, \ldots, a^{(n-1)}_{i_{n-2},i_{n-1}}, a^{(n)}_{i_{n-1},1}$ 
many accepting computation paths if $a > 0$ (resp. $a < 0$).
\qed
\end{proof}

\begin{definition} \label{def4}
A boolean circuit $C(\overline{w},\overline{x},\overline{y},\overline{z})$
with $k = |\overline{w}|$,
$m = |\overline{x}|$, and 
$n=|\overline{y}|=|\overline{z}|$ encodes a  sequence of $2^k$ many 
$(2^n \times 2^n)$-matrices: For every bit vector $\overline{a} \in \{0,1\}^k$, define
the circuit $C_{\overline{a}} = C(\overline{a},\overline{x},\overline{y},\overline{z})$
and the matrix $M_{\overline{a}} = M_{C_{\overline{a}},2}$.
Finally, let $M_C = \prod_{\overline{a} \in \{0,1\}^k} M_{\overline{a}}$ be the product of all these matrices. 
\end{definition}

\begin{lemma} \label{lemma-def4}
The function $C(\overline{w},\overline{x},\overline{y},\overline{z}) \mapsto M_C$
belongs to $\FPSPACE$.
\end{lemma}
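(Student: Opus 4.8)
The plan is to show, in close analogy with Lemma~\ref{lemma-def3}, that each individual entry of $M_C$ is the number of accepting computations of a nondeterministic polynomial space Turing machine, and then to emit $M_C$ entrywise; write $N$ for the length of the input. First I would check that the output size is admissible: $M_C$ is a product of $2^k$ matrices of dimension $2^n$ whose entries are bounded by $2^{2^m}$ and hence have at most $2^m$ bits, and in an iterated product $M_1\cdots M_t$ of $(D\times D)$-integer matrices the bit length of the entries grows only \emph{additively} (multiplying in one more factor increases it by at most $\log D$ plus the bit length of that factor's entries). Therefore the entries of $M_C$ have at most $2^{k+1}(n+2^m)$ bits, so $M_C$, written as the list of its $2^{2n}$ entries in binary, has length $2^{N^{O(1)}}$, which does not yet contradict membership in $\FPSPACE$.

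Now fix an input $(C,\overline{b},\overline{c})$ with $\overline{b},\overline{c}\in\{0,1\}^n$, let $\overline{a}_1,\dots,\overline{a}_{2^k}$ enumerate $\{0,1\}^k$ in lexicographic order, and for $\overline{e}\in\{0,1\}^m$ let $r(\overline{e})\in\{0,\dots,2^m-1\}$ be the integer it encodes. Expanding the matrix product and each entry $\bigl(M_{C_{\overline{a}_i},2}\bigr)_{\overline{p},\overline{q}}=\sum_{\overline{e}}C(\overline{a}_i,\overline{e},\overline{p},\overline{q})\,2^{r(\overline{e})}$ yields, with $\overline{d}_0=\overline{b}$ and $\overline{d}_{2^k}=\overline{c}$,
$$(M_C)_{\overline{b},\overline{c}}=\sum_{\substack{\overline{d}_1,\dots,\overline{d}_{2^k-1}\in\{0,1\}^n\\ \overline{e}_1,\dots,\overline{e}_{2^k}\in\{0,1\}^m}}\Bigl(\prod_{i=1}^{2^k}C(\overline{a}_i,\overline{e}_i,\overline{d}_{i-1},\overline{d}_i)\Bigr)\cdot 2^{\,\sum_{i=1}^{2^k}r(\overline{e}_i)},$$
a sum of nonnegative terms. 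I would then describe a nondeterministic machine with exactly that many accepting paths running in space $N^{O(1)}$: it processes the $2^k$ factors from left to right, maintaining a vector $\overline{d}_{\mathrm{prev}}$ (initially $\overline{b}$), a loop counter $i$, and an accumulator $T$ (initially $0$); in step $i$ it guesses $\overline{d}_i\in\{0,1\}^n$ (taking $\overline{d}_{2^k}:=\overline{c}$) and $\overline{e}_i\in\{0,1\}^m$, evaluates $C$ on first argument the binary encoding of $i-1$ and remaining arguments $\overline{e}_i,\overline{d}_{\mathrm{prev}},\overline{d}_i$, rejects the current path if this evaluates to $0$, updates $T:=T+r(\overline{e}_i)$ and $\overline{d}_{\mathrm{prev}}:=\overline{d}_i$; after the loop it performs $T$ successive binary branchings (driven by a counter), producing $2^{T}$ accepting leaves on the current path. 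Only the vectors $\overline{d}_{\mathrm{prev}},\overline{d}_i,\overline{e}_i$, the counters $i$ and $T$ (the latter of at most $k+m$ bits), the branching counter, and the workspace for evaluating $C$ are stored, all of size $N^{O(1)}$ — the exponential running time being harmless for a space-bounded machine. Hence $(C,\overline{b},\overline{c})\mapsto(M_C)_{\overline{b},\overline{c}}$ lies in $\mathsf{\#PSPACE}$, so by Ladner's characterization \cite{Ladner89} the map $(C,\overline{b},\overline{c})\mapsto\bin\bigl((M_C)_{\overline{b},\overline{c}}\bigr)$ is in $\FPSPACE$.

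To finish, the transducer for $C\mapsto M_C$ runs through all pairs $(\overline{b},\overline{c})\in\{0,1\}^n\times\{0,1\}^n$ in lexicographic order (a counter of $2n$ bits) and for each pair runs the above procedure, appending $\bin\bigl((M_C)_{\overline{b},\overline{c}}\bigr)$ with a separator to the output tape; since the workspace is reused across pairs, the total space stays $N^{O(1)}$, so $C\mapsto M_C\in\FPSPACE$. The delicate point is organizational rather than conceptual: one must stream through the $2^k$ factors keeping only polynomially many bits of state instead of materializing the exponentially many guesses or the exponentially large partial products, and one must account for the doubly exponential matrix entries by spawning $2^{T}$ accepting leaves — exactly as in Lemma~\ref{lemma-def3}, except that here $T$ is doubly exponential in value while still of polynomial bit length. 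The one quantitative check is that the bit length of the entries of $M_C$ grows only additively, so that the output respects the $\FPSPACE$ size bound.
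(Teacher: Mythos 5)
Your proof is correct, but it follows a genuinely different route from the paper's. The paper's proof is a two-line composition argument: it observes (i) that the tuple of explicit matrices $(M_{C_{\overline{a}},2})_{\overline{a}\in\{0,1\}^k}$ can be written down in polynomial space, and (ii) that iterated multiplication of explicitly given integer matrices lies in $\FSPACE(\log^2(n))\subseteq\FpolyL$, and then invokes Lemma~\ref{PSPACE-counting} ($\FpolyL\circ\FPSPACE\subseteq\FPSPACE$). You instead expand each entry of $M_C$ into a sum of nonnegative terms, build a nondeterministic polynomial-space machine with exactly that many accepting paths (the $2^T$-leaf branching phase correctly handles the fact that individual entries of the factors are doubly exponential in value while $T$ itself has only $k+m$ bits), and then invoke Ladner's characterization of $\mathsf{\#PSPACE}$ via $\FPSPACE$ \cite{Ladner89}; both ingredients are available in the paper and your quantitative checks (additive growth of entry bit lengths, reuse of workspace across the $2^{2n}$ entries) are the right ones. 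What the paper's route buys is brevity, by outsourcing all the counting to the known $\NC^2$/$\mathsf{GapL}$ bound for iterated matrix product; what your route buys is a self-contained argument that runs exactly parallel to Lemma~\ref{lemma-def3} (with $\mathsf{\#P}$ upgraded to $\mathsf{\#PSPACE}$) and yields the slightly stronger statement that each individual entry $(C,\overline{b},\overline{c})\mapsto(M_C)_{\overline{b},\overline{c}}$ is itself a $\mathsf{\#PSPACE}$ function. Note that for the $\ZZ$-valued variant one would need the $\mathsf{GapPSPACE}$ version of your argument, which is where the paper's appeal to the generic composition lemma is more uniform; for the $\NN$-valued matrices $M_{C,2}$ as defined in Section~\ref{sec-bool-circuits} your argument is complete as stated.
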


\begin{proof}
The lemma follows from Lemma~\ref{PSPACE-counting} and the following two facts:
(i) From the circuit $C(\overline{w},\overline{x},\overline{y},\overline{z})$ one can compute
the tuple of matrices $(M_{C_{\overline{a}},2})_{\overline{a} \in \{0,1\}^k}$
in polynomial space (simply iterate over all valuations for the boolean variables $\overline{w},\overline{x},\overline{y},\overline{z}$),
and (ii) computing an iterated matrix product of explicitly given matrices can be done in $\FSPACE(\log^2(n))$.
\qed
\end{proof}
Lemmas~\ref{lemma-def3} and \ref{lemma-def4} 
yield the upper complexity bounds in the following theorem.

\begin{theorem}\label{thm:matrix-powering}
The following holds: 
\begin{enumerate}[(1)]
\item
The function $(G,n) \mapsto (\val(G)^n)_{1,1}$ with
$G$ an MTDD 
over $\mathbb{N}$ (resp. $\ZZ$)
and $n$  a unary encoded number is
complete for $\mathsf{\#P}$ (resp., $\mathsf{GapP})$.
\item
The function $(G,n) \mapsto (\val(G)^n)_{1,1}$ with $G$ an MTDD 
over $\mathbb{N}$ (resp. $\ZZ$)
and $n$  a binary encoded number
is $\mathsf{\#PSPACE}$-complete  (resp., $\mathsf{GapPSPACE}$-complete).
\end{enumerate}
\end{theorem}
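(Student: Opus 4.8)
The plan is to prove each direction of the two claims separately, with the upper bounds already handed to us by Lemmas~\ref{lemma-def3} and~\ref{lemma-def4}, so the real work is the hardness reductions; and these in turn will reuse the machinery of Section~\ref{sec-config-graph} almost verbatim. For statement~(1), $\mathsf{\#P}$-hardness over $\NN$: take a nondeterministic polynomial-time machine $M$, and without loss of generality assume (by padding) that every computation of $M$ on an input $x$ of length $n$ halts after exactly $t = t(n)$ steps for a fixed polynomial $t$, and uses at most $m = m(n)$ tape cells. Apply Lemma~\ref{reach} to obtain, in logspace, a layered DFA $A(M,m)$ recognizing the one-step relation $c_1 \vdash_M c_2$ on configurations encoded as length-$(m+1)$ strings, and then Lemma~\ref{lemma-layered} to turn this into an MTDD $G$ whose $\val(G)$ is the adjacency matrix $\mathrm{adj}$ of the configuration graph. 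The key combinatorial fact is that $(\mathrm{adj}^t)_{c_0, c_f}$, where $c_0$ encodes the initial configuration $q_0 x \Box^{m-n}$ and $c_f$ the unique accepting configuration $q_f \Box^m$, counts exactly the number of length-$t$ paths from $c_0$ to $c_f$ in the configuration graph, which is the number of accepting computations of $M$ on $x$ — i.e.\ the generic $\mathsf{\#P}$ value. To get the $(1,1)$-entry rather than the $(c_0,c_f)$-entry one conjugates by a permutation matrix (itself an MTDD, by Proposition~\ref{prop:simple-constructions}-type constructions, or more simply by pre- and post-composing $A(M,m)$ with tiny layered DFAs that hard-wire the source and target strings, exactly as in the proof of Theorem~\ref{theorem-det}); this moves $c_0$ to index $1$ and $c_f$ to index $1$. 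Finally $n := t(n)$ is polynomial in $|x|$, hence unary-writable in logspace, completing the reduction.

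For the $\mathsf{GapP}$-hardness over $\ZZ$ in part~(1): a generic $\mathsf{GapP}$ function is the difference of the accepting-path counts of two NTMs $M_1, M_2$. One builds, as in the lower-bound proof of Theorem~\ref{theorem-det}(2), a single layered DFA over $\{0,1\}\times\{0,1\}$ whose graph is the disjoint union of the two configuration graphs together with a small gadget; but now the target is matrix powering rather than a determinant, so the gadget is arranged so that the $(1,1)$-entry of the $n$-th power picks up $+1$ for each accepting $M_1$-path and $-1$ for each accepting $M_2$-path. Concretely, after padding both machines to the same step count $t$, I would add a source node with a $+1$ edge into $c_0^{(1)}$ and a $-1$ edge into $c_0^{(2)}$ (a $-1$ terminal entry is allowed since the underlying ring is $\ZZ$), edges from $c_f^{(1)}$ and $c_f^{(2)}$ back to the source, and then set $n$ to be the appropriate fixed length so that exactly the intended paths are counted; one must check parity/length bookkeeping so that no unintended closed walks of length $n$ contribute, which is the same non-looping / fixed-halting-time trick already used above. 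Then $(\val(G)^n)_{1,1} = \#\mathrm{acc}(M_1,x) - \#\mathrm{acc}(M_2,x)$ as required.

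For statement~(2), the upper bounds come from Lemma~\ref{lemma-def4} together with the fact (Proposition~\ref{prop:simple-constructions}(3)) that an MTDD can be converted in logspace into a boolean circuit of the type in Definition~\ref{def4}: one uses the circuit with $k = |\overline w|$ bits of $\overline w$ ranging over $\{1,\dots,n\}$ (so $k = \lceil \log n\rceil$, polynomial in $|G|$ since $n$ is binary-encoded) and makes every $C_{\overline a}$ equal to the circuit computing $\val(G)$; then $M_C = \val(G)^n$, and Lemma~\ref{lemma-def4} gives $\FPSPACE$, hence (by Ladner's characterization and the $\mathsf{GapPSPACE}$ analogue cited in the preliminaries) membership in $\mathsf{\#PSPACE}$ (resp.\ $\mathsf{GapPSPACE}$). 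For the matching hardness, the same reductions as in part~(1) work, except that now $M$ (resp.\ $M_1, M_2$) are polynomial-\emph{space} bounded, so their computations have length up to $2^{p(n)}$; setting $n := 2^{p(n)}$ — a binary-encoded number of polynomially many bits, writable in logspace — makes $(\val(G)^n)_{1,1}$ count accepting computations of a polynomial-space NTM, which is the generic $\mathsf{\#PSPACE}$ (resp.\ $\mathsf{GapPSPACE}$) quantity. Here one must again ensure the machine is non-looping and has a fixed halting time $\le n$ (a binary counter of $p(n)$ bits decremented each step), so that powering to exponent $n$ faithfully sums over accepting paths and nothing else.

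The step I expect to be the main obstacle is the careful bookkeeping around \emph{exact} path lengths and the source/sink gadget: matrix powering $\val(G)^n$ counts walks of length exactly $n$, not paths, and not walks of length $\le n$, so one must pad each machine so that every halting computation has the same length, add self-loops or a "parking" sink to absorb finished computations for the remaining steps, and verify that the only length-$n$ closed walks through the designated entry are the intended ones. For the $\mathsf{GapP}/\mathsf{GapPSPACE}$ versions there is the extra subtlety that the $\pm 1$ signs must come out right regardless of path length — unlike in the determinant proof, where signs were controlled by the parity of the permutation, here they must be hard-wired into a single edge weight per machine, which forces the two machines' computations to be routed through the gadget with a common length, and this is exactly where the padding-to-equal-length assumption earns its keep. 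Everything else (the logspace bounds on the reductions, writing $n$ in unary vs.\ binary, the conversion of an MTDD to a boolean circuit) is routine given the lemmas already established.
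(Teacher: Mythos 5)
Your proposal is correct and follows essentially the same route as the paper: upper bounds from Lemmas~\ref{lemma-def3} and \ref{lemma-def4}, and hardness by building an MTDD for the configuration graph (Lemmas~\ref{reach} and \ref{lemma-layered}), padding all computations to a fixed length, and closing a cycle through the index-$1$ node $0^k$ so that $(\val(G)^{t+2})_{1,1}$ counts accepting runs, with a $\pm 1$-weighted source gadget for the $\mathsf{GapP}$/$\mathsf{GapPSPACE}$ versions. One small caveat: conjugating by a permutation matrix would only relocate a \emph{diagonal} entry of $\val(G)^n$ and cannot by itself expose the off-diagonal $(c_0,c_f)$-entry, so your alternative suggestion (the extra edges $0^k \to w_0$ and $w_f \to 0^k$, at the cost of raising the exponent to $t+2$) is the construction that actually works and is exactly what the paper does.
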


\begin{proof}
It remains to prove the lower bound, for which
we use again succinct versions of Toda's techniques from \cite{Toda91countingproblems},
similar to the proof of Theorem~\ref{theorem-det}.

Let us start with the statements concerning 
$\mathsf{\#P}$ and $\mathsf{\#PSPACE}$.
We start with (1).
Let $M$ be a fixed nondeterministic polynomial time Turing machine. 
One can assume that all maximal computations
of $M$ on an input $x$ of length $n$ have length $p(n)$ for some polynomial $p$.
Let $x$ be an input for $M$ of length $n$, and let 
$m = p(n)$ and $k = k_M(m+1)$. 
We now apply the construction from the 
proof of Lemma~\ref{reach} to $M$ and $m$. 
We obtain a layered DFA $A(M,m)$
such that
$$
L(A(M,m)) = \{ f_M(c_1) \otimes f_M(c_2) \mid c_1, c_2  \in \Gamma^*Q\Gamma^*,
|c_1| = |c_2|=m+1, c_1 \vdash_M c_2 \}.
$$
Let $w_0 = f_M(q_0 x \Box^{m-n})$ be the encoding of the initial configuration,
and $w_f = f_M(q_f \Box^m)$ be the encoding of the unique accepting configuration.
Recall that $0^k$ does not belong to $f_M(\Gamma^*Q\Gamma^*)$.
As in the proof of Theorem~\ref{theorem-det} we obtain
a layered DFA $A(M,x)$
such that
$$
L(A(M,x)) = L(A(M,m))  \cup \{ 0^k \otimes w_0, w_f \otimes 0^k \}.
$$
Let $\mathcal{G}(M,x)$ be the directed  graph  $\mathcal{G}(A(M,x))$, whose node set is
$\{0,1\}^k$ and there is an edge from $v$ to $w$ if and only if 
$v \otimes w \in L(A(M,x))$. Let $\text{adj}(M,x)$ be the adjacency
matrix of $\mathcal{G}(M,x)$.
As in the proof of Theorem~\ref{theorem-det} we obtain an MTDD $G$
such that $\val(G) = \text{adj}(M,x)$.

Then the number of accepting computations of the machine $M$ on input
$x$ is equal to the number of paths of length $p(n)+2$ in the graph 
$\mathcal{G}(M,x)$ from node $0^k$ to node $0^k$. This number is
equal to $(\val(G)^{p(n)+2})_{1,1}$.

The  $\mathsf{\#PSPACE}$-hardness in point (2) of the theorem is proven in the
same way. For a nondeterministic polynomial space bounded Turing-machine
one can assume that all maximal computations
of $M$ on an input $x$ of length $n$ have length $2^{p(n)}$ for some
polynomial $p$.
Hence, we only have to replace the number $m+2$ in the above proof
by $2^m+2$.

Let us now turn to the lower bounds concerning $\mathsf{GapP}$ and $\mathsf{GapPSPACE}$
in the theorem. The proofs are very
similar to the corresponding proofs for $\mathsf{\#P}$ and
$\mathsf{\#PSPACE}$, respectively.
We only consider (2). 
We have to come up with an MTDD over $\{0,1-1\}$. 
Such an MTDD corresponds to a layered DFA, where the last
layer contains three states, corresponding to the three possible 
matrix entries $0$, $1$, and $-1$. Now, take two polynomial space
bounded Turing machines $M_1$ and $M_2$ (with the same input
alphabet), such that all accepting computations of $M_1$ and $M_2$
on an input of length $m$ have length $2^{p(m)}$. Moreover, let $x$
be an input for $M_1$ and $M_2$. 
We have to come up with a layered DFA (with three nodes in the 
last layer) that defines the following $\{1,-1\}$-labeled directed
graph $\mathcal{G}$:
\begin{itemize}
\item $\mathcal{G}$ consists of a disjoint copy of
  $\mathcal{G}(M_1,m)$ and $\mathcal{G}(M_2,m)$ (all edges are
  labelled with $1$) together with an additional node $s$.
\item There is a $1$-labeled edge from node $s$ to the copy of the
  initial configuration of $M_1$ in $\mathcal{G}(M_1,m)$.
\item There is a $-1$-labeled edge from node $s$ to the copy of the
  initial configuration of $M_2$ in $\mathcal{G}(M_2,m)$.
\item There are $1$-labeled edges from the copies of the unique
  accepting configurations in $M_1$ and $M_2$, respectively, back
  to node $s$.
\end{itemize}
Analogously to the construction in the proof of (2) from 
Theorem~\ref{theorem-det} we can construct such a layered DFA.
For the MTDD $G$ over $\{0,1,-1\}$ corresponding to this layered
DFA, $(\val(G)^{2^{p(m)}+2})_{1,1}$ is equal to the number of 
accepting computations of $M_1$ on input $x$ minus the number of 
accepting computations of $M_2$ on input $x$. 
\qed
\end{proof}
By Theorem~\ref{thm:matrix-powering}, there is no polynomial time algorithm that
computes for a given MTDD $G$ and a unary number $n$
a boolean circuit (or even an $\MTDD_+$) for the power $\val(G)^n$, unless $\mathsf{\#P} = \mathsf{FP}$.

By \cite{Toda91countingproblems} and Theorem~\ref{thm:matrix-powering},
the complexity of computing a specific entry of a matrix power $A^n$ covers three different
counting classes, depending on the representation of the matrix $A$ and the exponent $n$ 
(let us assume that $A$ is a matrix over $\mathbb{N}$):
\begin{itemize}
\item $\#\mathsf{L}$-complete, if $A$ is given explicitly and $n$ is given unary.
\item $\#\mathsf{P}$-complete, if $A$ is given by an MTDD and $n$ is given unary.
\item $\#\mathsf{PSPACE}$-complete, if $A$ is given by an MTDD and $n$ is given binary.
\end{itemize}
Let us also mention that in \cite{CMcKTV98,GalotaV05,MereghettiP00} the complexity of evaluating iterated matrix products 
and matrix powers in a fixed
dimension is studied. 
It turns out that multiplying a sequence  of $(d \times d)$-matrices over $\mathbb{Z}$ 
in the fixed dimension $d \geq 3$
is complete for the class $\mathsf{GapNC}^1$ (the counting version of the circuit
complexity class $\mathsf{NC}^1$) \cite{CMcKTV98}. It is open whether the same problem
for matrices over $\mathbb{N}$ is complete for $\#\mathsf{NC}^1$. Moreover, the case $d=2$ is open
too.  Matrix powers for matrices in a fixed dimension can be computed in $\mathsf{TC}^0$ (if the exponent
is represented in unary notation) using the 
Cayley-Hamilton theorem \cite{MereghettiP00}.
Finally, multiplying a sequence of $(d \times d)$-matrices
that is given succinctly by a boolean circuit captures the class $\FPSPACE$ for any $d \geq 3$ \cite{GalotaV05}.

For the problem, whether a power of an MTDD-encoded matrix is zero
(a variant of the classical mortality problem) we can finally show the following:

\begin{theorem}\label{thm:matrix-multi-coNP}
It is $\coNP$-complete (resp.,$\PSPACE$-complete) to check 
whether $\val(G)^m$ is the zero matrix for a given 
MTDD $G$ and a unary (resp., binary) encoded
number $m$.
\end{theorem}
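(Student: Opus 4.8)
The plan is to prove matching upper and lower bounds in each of the two regimes. I take the underlying semiring to be $\NN$; the lower bounds will only use matrices with entries in $\{0,1\}$, so they apply over $\ZZ$ as well. The starting observation is: $\val(G)^m$ is the zero matrix if and only if the directed graph $H$ on node set $\{0,1\}^{\height(G)}$, with an edge $u\to v$ whenever $\val(G)_{u,v}\neq 0$, has no walk of length exactly $m$.

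\emph{Upper bounds.} In the unary case, $\val(G)^m\neq 0$ is witnessed by a walk $v_0,v_1,\dots,v_m$ in $H$; such a witness has $(m+1)\cdot\height(G)$ bits, which is polynomial since $m$ is unary, and each step $\val(G)_{v_{i-1},v_i}\neq 0$ is checkable in polynomial time by Proposition~\ref{prop:simple-constructions}(3). Hence the complement of our problem is in $\NP$, i.e.\ the problem is in $\coNP$. In the binary case we cannot afford to write down the whole matrix (it may have doubly exponentially many bits), so we work entry by entry: for fixed $i,j$ the number $(\val(G)^m)_{i,j}$ has only exponentially many bits in the input length and is computable in $\FPSPACE$ (by the usual divide-and-conquer recursion for powers, in the spirit of Lemma~\ref{lemma-def4} and the remarks after Theorem~\ref{thm:matrix-powering}); composing with the $\polyL$ test ``is this number $0$?'' and invoking Lemma~\ref{PSPACE} puts ``$(\val(G)^m)_{i,j}=0$'' in $\PSPACE$, hence so is $\exists i,j\colon(\val(G)^m)_{i,j}\neq 0$, and therefore also its negation. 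So the problem is in $\PSPACE$.

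\emph{Lower bounds.} We adapt the configuration-graph technique of Section~\ref{sec-config-graph}, but since now the \emph{whole} matrix must vanish in the negative instance we must rule out all long walks, not just close one designated cycle. Let $\hat M$ be a deterministic polynomial-space Turing machine and $x$ an input of length $n$ (for $\coNP$ take instead a nondeterministic polynomial-time $\hat M$); let $T_1$ be large enough that $\hat M$ has halted or entered a loop after $T_1$ steps (so $T_1=2^{p(n)}$ in the space-bounded case, $T_1=q(n)$ in the time-bounded case), and put $T=2T_1$. Build a machine $M'$ carrying an explicit clock counting down from $T$: \emph{phase~1} simulates $\hat M$ for at most $T_1$ steps and, if $\hat M$ has not accepted (because it rejects, or because $T_1$ steps have elapsed), $M'$ enters a dead-end rejecting configuration; if $\hat M$ accepts at some step $\tau\le T_1$, \emph{phase~2} merely decrements the clock each step, changing nothing else, until it reaches $0$, at which point $M'$ halts in a dead-end configuration $c_{\mathrm{acc}}$. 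Since $T-\tau\ge T_1>0$, every maximal computation of $M'$ from its initial configuration $c_0$ has length exactly $T$ and ends at $c_{\mathrm{acc}}$ if it extends an accepting run of $\hat M$, and length $<T$ otherwise. Using Lemma~\ref{reach} and Lemma~\ref{lemma-layered} we compute in logspace an MTDD $G$ over $\{0,1\}$ with $\val(G)$ the adjacency matrix of the graph of $M'$, but we first restrict the edge relation so that $c_0$ is the \emph{only} configuration of maximal clock $T$ that keeps its outgoing edges: this is easy for a layered DFA, since $c_0$ is a fixed string, so we just recognise $\{c_1\otimes c_2 : c_1\vdash_{M'} c_2,\ \mathrm{clock}(c_1)<T\}\cup\{c_0\otimes c_2 : c_0\vdash_{M'} c_2\}$. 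The clock strictly decreases along every edge, so this graph is acyclic and any walk of length $m=T$ must begin at a node of clock $T$ with an outgoing edge, i.e.\ at $c_0$, and then is forced to follow a computation of $M'$; hence $\val(G)^T\neq 0$ iff $\hat M$ accepts $x$. Taking $m=T$ in binary (space-bounded case), resp.\ in unary (time-bounded case, where $\hat M$ runs in polynomial time and the graph has $2^{O(q(n))}$ nodes), gives $\PSPACE$-hardness resp.\ $\NP$-hardness of ``$\val(G)^m\neq 0$'', hence $\PSPACE$-hardness resp.\ $\coNP$-hardness of the stated problem.

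The delicate point — the expected main obstacle — is exactly this restriction. For the determinant (Theorem~\ref{theorem-det}) and for a fixed entry of a matrix power (Theorem~\ref{thm:matrix-powering}) one can afford self-loops at all ``uninteresting'' nodes, since they contribute $1$ to a cycle cover or extend a closed walk harmlessly; here such self-loops, together with the exponentially many configurations of clock value $0$ that could seed their own long walks, would immediately make $\val(G)^m$ nonzero in the negative instance. The counting-down clock plus the ``only $c_0$ keeps full-clock out-edges'' trick is what pins every length-$T$ walk to a genuine computation of $\hat M$; verifying that the restricted edge relation is still recognised by a logspace-constructible layered DFA, and that no cycle has sneaked in, is the technical heart of the argument.
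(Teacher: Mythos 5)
Your proof is correct, and its core engine is the same as the paper's: encode the configuration graph of a counter-augmented Turing machine as an MTDD via Lemmas~\ref{reach} and \ref{lemma-layered}, and use the counter to control walk lengths so that $\val(G)^m\neq 0$ becomes equivalent to acceptance. The difference lies in how the two arguments kill unwanted length-$m$ walks, and here your version is actually the more careful one. The paper keeps the return edges $(w_f,0^k)$ and $(0^k,w_0)$ from the proof of Theorem~\ref{thm:matrix-powering}, adds an increasing counter so that paths inside the configuration graph have length at most $m$, and asserts that a walk of length $m+2$ exists iff $M$ accepts $x$; as you essentially point out, the ``only if'' direction needs more than the length bound, since a walk of length $m+2$ could be stitched together from a partial computation ending at $w_f$ that starts in a ``garbage'' configuration never reachable from $w_0$, followed by the two return edges and a prefix of the run from $w_0$. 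Your decreasing clock plus the restriction that only the initial configuration retains out-edges at clock value $T$ forces every length-$T$ walk to start at $c_0$ and track a genuine computation, which closes exactly this hole; the price is the extra (routine, and correctly flagged by you) verification that the restricted relation is still a logspace-constructible layered DFA. Two further remarks: the paper also gives a second, purely combinatorial proof of the $\coNP$-hardness part via the 3SAT block matrix from Theorem~\ref{thm:MTDD-eq-0-npcomplete}, which you may find a cleaner route for that half; and your explicit restriction to $\NN$ for the $\coNP$ upper bound (so that nonvanishing of an entry is witnessed by a single walk, with no cancellation) is a point the paper leaves implicit and is worth keeping.
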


\begin{proof}
Take the construction from the proof of the lower bound from point (1) of
Theorem~\ref{thm:matrix-powering}. 
Recall that $p(n)$ was the time bound of $M$. We assumed that
all maximal computation paths for an input of length $n$ have length
exactly $p(n)$. Let $m=p(n)$.
We can modify the Turing machine $M$ in such a way that 
the graph $\mathcal{G}(M,m)$ (the configuration graph of $M$ on configurations
of tape length $m$)
 does not have directed paths of 
length larger than $m$ (e.g. by splitting the tape of $M$ into 
two tracks and incrementing a unary counter on the second track).
This means that in the graph $\mathcal{G}(M,x)$ there is a path of
length $m+2$ if and only if $x$ is accepted by $M$.
Thus, $x$ is accepted by $M$ if and only if 
$\val(G)^{p(n)+2}$ is not the zero matrix.
The statement concerning $\PSPACE$-completeness is proven in the
same way (we just have to ensure by adding a binary counter on the second track that 
the graph $\mathcal{G}(M,m)$ does not have directed paths of 
length larger than $2^{p(n)}$).
\qed
\end{proof}
Here is a more direct proof for the $\coNP$-hardness statement in Theorem~\ref{thm:matrix-multi-coNP},
which uses a reduction from the complement of 3SAT.

\medskip
\noindent
{\em Alternative proof of Theorem~\ref{thm:matrix-multi-coNP}.}
Let $C = \bigwedge_{i=1}^m C_i$ be a 3CNF formula. In the proof of
Theorem~\ref{thm:MTDD-eq-0-npcomplete} we constructed MTDD 
$G_1, \ldots, G_m$ such that $\val(G_i)$ is the diagonal matrix, 
where the diagonal is the binary string of all truth values
of the clause $C_i$, taken in lexicographic order.
From the MTDD 
$G_1, \ldots, G_m$ we easily obtain an MTDD $G$ such that
$$
\val(G) = \left(\begin{array}{ccccccc}
0 & \val(G_1) & 0 & 0 & \cdots & 0 & 0  \\
0 & 0 & \val(G_2) & 0 & \cdots & 0 & 0  \\
0 & 0 & 0 & \val(G_3) & \cdots & 0 & 0  \\
  &   &   & \vdots    &        &   & \\
0 & 0 & 0 & 0 & \cdots & \val(G_{m-1}) & 0 \\
0 & 0 & 0 & 0 & \cdots & 0 & \val(G_m) \\
0 & 0 & 0 & 0 & \cdots & 0 & 0 
\end{array}\right).
$$
Here, we have to assume that $m+1$ is a power of two, which can 
be enforced by adding dummy clauses.
Since the matrices $\val(G_i)$ commute (they are diagonal matrices)
and are idempotent (since all diagonal values are $0$ or $1$),
the matrix $\val(G)^m$ contains only $0$-blocks except for the
top right-most block, which is $\prod_{i=1}^m \val(G_i)$.
Thus, $\val(G)^m$ is the zero matrix if and only if $C$ is
unsatisfiable.
\qed

\section{Conclusion and future work}
   
   We studied algorithmic problems on matrices that are given by multi-terminal decision diagrams
   enriched by the operation of matrix addition. Several  important
   matrix problems can be solved in polynomial time for this representation, e.g., equality checking, computing 
   matrix entries, matrix multiplication, computing the trace, etc. 
   On the other hand, computing determinants,  matrix powers, and iterated matrix products are
   computationally hard.  For further research, it should be investigated whether the polynomial time problems, like equality test,
   belong to $\NC$.
   Also an experimental implementation is planned for testing practical efficiency.
  

\end{document}